\newcommand{\eval}[2][\right]{\relax
  \ifx#1\right\relax \left.\fi#2#1\rvert}
\newcommand{\pd}{{\partial}}
\newcommand{\la}{{\lambda}}
\newcommand{\ost}{\mathbb{U}}
\newcommand{\bG}{\mathbf{G}}
\newcommand{\er}{\eqref}
\newcommand{\cl}{\colon}
\newcommand{\beq}{\begin{equation}}
\newcommand{\ee}{\end{equation}}
\newcommand{\bmu}{\begin{multline}}
\newcommand{\emul}{\end{multline}}
\newcommand{\frl}{\mathfrak{F}}
\newcommand{\frid}{\mathfrak{I}}
\newcommand{\fla}{\mathbf{A}}
\newcommand{\flb}{\mathbf{B}}
\newcommand{\flz}{\mathbf{Z}}
\newcommand{\wga}{\mathcal{A}}
\newcommand{\wgb}{\mathcal{B}}
\newcommand{\wea}{\mathfrak{W}_a}
\newcommand{\zf}{\mathbf{F}}
\newcommand{\zs}{\mathbf{S}}
\newcommand{\zd}{\mathrm{D}}
\newcommand{\xtf}{\mathbb{L}}
\newcommand{\swe}{\mathfrak{R}}
\newcommand{\ds}{\mathbf{S}}
\newcommand{\anA}{\mathsf{A}}
\newcommand{\anB}{\mathsf{B}}
\newcommand{\CE}{\mathcal{E}}
\newcommand{\ce}{\mathcal{E}}
\newcommand{\zp}{\mathbb{Z}_{\ge 0}}
\newcommand{\zsp}{\mathbb{Z}_{>0}}
\newcommand{\ad}{{\rm ad\,}}
\newcommand{\msl}{\mathfrak{sl}}
\newcommand{\gl}{\mathfrak{gl}}
\newcommand{\mg}{\mathfrak{g}}
\newcommand{\agn}{\mathfrak{A}}
\newcommand{\bl}{\mathfrak{L}}
\newcommand{\ga}{\mathbb{A}}
\newcommand{\gb}{\mathbb{B}}
\newcommand{\lb}{\label}
\newcommand{\vf}{\varphi}
\newcommand{\Com}{\mathbb{C}}
\newcommand{\fik}{\mathbb{K}}
\newcommand{\un}{\mathrm{U}}
\DeclareMathOperator{\fd}{\mathbb{F}}
\DeclareMathOperator{\fds}{\mathbb{F}}
\newcommand{\zcs}{\mathcal{V}}
\newcommand{\oc}{p}
\newcommand{\ocs}{p}
\newcommand{\sm}{N}
\newcommand{\hrf}{\mu}
\newcommand{\eo}{d}
\newcommand{\kd}{q}
\newcommand{\ic}{m}
\newtheorem{theorem}{Theorem}
\newtheorem{lemma}{Lemma}
\theoremstyle{definition}
\newtheorem{example}{Example}
\newtheorem{remark}{Remark}
\begin{document}

\title[Lie algebras responsible for zero-curvature representations]
{Lie algebras responsible for zero-curvature representations of 
scalar evolution equations}
\date{}

\author{Sergei Igonin}
\address{Center of Integrable Systems, 
P.G. Demidov Yaroslavl State University, Yaroslavl, Russia, \\
INdAM, Dipartimento di Scienze Matematiche, Politecnico di Torino, \\
Corso Duca degli Abruzzi 24, 10129 Torino, Italy\\
\textup{E-mail address: s-igonin@yandex.ru}}

\author{Gianni Manno}
\address{Dipartimento di Scienze Matematiche, Politecnico di Torino, \\
Corso Duca degli Abruzzi 24, 10129 Torino, Italy\\
\textup{E-mail address: giovanni.manno@polito.it}}

\begin{abstract}
Zero-curvature representations (ZCRs) are one 
of the main tools in the theory of integrable PDEs. 
In particular, Lax pairs for (1+1)-dimensional PDEs can be interpreted as ZCRs. 

For any (1+1)-dimensional scalar evolution equation $\CE$, 
we define a family of Lie algebras $\fds(\CE)$ 
which are responsible for all ZCRs of $\CE$ in the following sense.
Representations of the algebras $\fds(\CE)$ 
classify all ZCRs of the equation $\CE$ up to local gauge transformations. 
To achieve this, we find a normal form for ZCRs with respect 
to the action of the group of local gauge transformations.

As we show in other publications, using these algebras, 
one obtains some necessary conditions for integrability of the considered PDEs    
(where integrability is understood in the sense of soliton theory)
and necessary conditions for existence of a B\"acklund transformation   
between two given equations.
Examples of proving non-integrability and 
applications to obtaining non-existence results for B\"acklund transformations
are presented in other publications as well.



In our approach, ZCRs may depend on partial derivatives of arbitrary order, 
which may be higher than the order of the equation $\CE$.
The algebras $\fds(\CE)$ 
generalize Wahlquist-Estabrook prolongation algebras, 
which are responsible for a much smaller class of ZCRs.

In this paper we describe general properties of $\fds(\CE)$  
and present generators and relations for these algebras.
In other publications we study the structure of $\fds(\CE)$   
for equations of
KdV, Krichever-Novikov, Kaup-Kupershmidt, Sawada-Kotera types.
Among the obtained algebras, one finds  
infinite-dimensional Lie algebras of certain matrix-valued functions 
on rational and elliptic algebraic curves.
\end{abstract}

\keywords{Scalar evolution equations, 
zero-curvature representations,
gauge transformations,
normal forms for zero-curvature representations,
infinite-dimensional Lie algebras} 

\subjclass[2010]{37K30, 37K35}

\maketitle

\section{Introduction}
\lb{subsint1}

Zero-curvature representations and B\"acklund transformations 
belong to the main tools in the theory of integrable PDEs 
(see, e.g.,~\cite{ft,backl,zakh-shab}).
This paper along with~\cite{zcrm17,sbt18,sint18} is part of a research program
on investigating the structure of zero-curvature representations (ZCRs)
for partial differential equations (PDEs) of various types. 
The study of ZCRs performed in this paper leads to some results on 
B\"acklund transformations and integrability, which are described in~\cite{sbt18,sint18}.

Here we study (1+1)-dimensional scalar evolution equations
\beq
\label{eveq_intr}
u_t=F(x,t,u_0,u_1,\dots,u_{\eo}),\qquad\quad  
u=u(x,t),
\ee
where we use the notation
\beq
\lb{dnot}
u_t=\frac{\pd u}{\pd t},\qquad\quad u_0=u,\qquad\quad
u_k=\frac{\pd^k u}{\pd x^k},\qquad\quad
k\in\zp.
\ee
The number $\eo\ge 1$ in~\er{eveq_intr} is such that 
the function $F$ may depend only on $x$, $t$, $u_k$ for $k\le\eo$. 
The symbol $\zp$ denotes the set of nonnegative integers.

Methods of this paper can also be applied
to (1+1)-dimensional multicomponent evolution PDEs, see~\cite{zcrm17}.

\begin{remark}
\lb{rfxtu}
When we consider a function $Q=Q(x,t,u_0,u_1,\dots,u_l)$ 
for some $l\in\zp$, we always assume that this function is analytic 
on an open subset of the space with the coordinates 
$x,t,u_0,u_1,\dots,u_l$.
For example, $Q$ may be a meromorphic function,  
because a meromorphic function is analytic on some open subset.
\end{remark}

PDEs of the form~\er{eveq_intr} have attracted a lot of attention in the last $50$ years and 
have been a source of many remarkable results on integrability. 
In particular, some types of equations~\er{eveq_intr} possessing 
higher-order symmetries and conservation laws have been 
classified (see, e.g.,~\cite{mesh-sok13,mikh91,sand-wang2009} and references therein).
However, the problem of complete understanding of all integrability properties 
for equations~\er{eveq_intr} is still far from being solved.  

Examples of integrable PDEs of the form~\er{eveq_intr}  
include the Korteweg-de Vries (KdV), Krichever-Novikov~\cite{krich80,svin-sok83}, 
Kaup-Kupershmidt~\cite{kaup80}, Sawada-Kotera~\cite{sk74} (Caudrey-Dodd-Gibbon~\cite{cdg76}) 
equations. 
Many more examples can be found in~\cite{mesh-sok13,mikh91,sand-wang2009} 
and references therein. 

In the present paper, integrability is understood 
in the sense of soliton theory and the inverse scattering method. 
(This is sometimes called S-integrability.) 
It is well known that, in order to investigate possible 
integrability properties of~\er{eveq_intr}, 
one needs to consider ZCRs.
(In particular, Lax pairs for equations~\er{eveq_intr} 
can be interpreted as ZCRs.) 

Let $\mg$ be a finite-dimensional Lie algebra.  
For an equation of the form~\er{eveq_intr}, 
a \emph{zero-curvature representation \textup{(}ZCR\textup{)} 
with values in~$\mg$} is given by $\mg$-valued functions
\beq
\lb{mnoc}
A=A(x,t,u_0,u_1,\dots,u_\ocs),\qquad\quad B=B(x,t,u_0,u_1,\dots,u_{\ocs+\eo-1})
\ee
satisfying
\beq
\lb{mnzcr}
D_x(B)-D_t(A)+[A,B]=0.
\ee

The \emph{total derivative operators} $D_x$, $D_t$ in~\er{mnzcr} are 
\beq
\lb{evdxdt}
D_x=\frac{\pd}{\pd x}+\sum_{k\ge 0} u_{k+1}\frac{\pd}{\pd u_k},\qquad\qquad
D_t=\frac{\pd}{\pd t}+\sum_{k\ge 0} D_x^k\big(F(x,t,u_0,u_1,\dots,u_{\eo})\big)\frac{\pd}{\pd u_k}.
\ee

The number $\ocs$ in~\er{mnoc} is such that  
the function $A$ may depend only on the variables $x$, $t$, $u_{k}$ for $k\le\ocs$.
Then equation~\er{mnzcr} implies that 
the function $B$ may depend only on $x$, $t$, $u_{k'}$ for $k'\le\ocs+\eo-1$.

Such ZCRs are said to be \emph{of order~$\le\ocs$}. 
In other words, a ZCR given by $A$, $B$ is of order~$\le\ocs$ iff 
$\dfrac{\pd A}{\pd u_l}=0$ for all $l>\ocs$. 

The right-hand side $F=F(x,t,u_0,\dots,u_{\eo})$ 
of~\er{eveq_intr} appears in condition~\er{mnzcr}, 
because $F$ appears in the formula for the operator $D_t$ in~\er{evdxdt}.
Note that~\er{mnzcr} can be written as $[D_x+A,\,D_t+B]=0$, because $[D_x,D_t]=0$.
See also Remark~\ref{als} for another interpretation of equation~\er{mnzcr}.

We study the following problem. 
How to describe all ZCRs~\er{mnoc},~\er{mnzcr} for a given equation~\er{eveq_intr}? 

In the case when $\ocs=0$ and the functions $F$, $A$, $B$ do not depend on $x$, $t$, 
a partial answer to this question is provided by the Wahlquist-Estabrook prolongation
method (WE method for short). 
Namely, for a given equation of the form $u_t=F(u_0,u_1,\dots,u_{\eo})$, 
the WE method constructs a Lie algebra so that ZCRs of the form
\beq
\lb{wecov}
A=A(u_0),\qquad B=B(u_0,u_1,\dots,u_{\eo-1}),\qquad
D_x(B)-D_t(A)+[A,B]=0
\ee
correspond to representations of this algebra (see, e.g.,~\cite{dodd,mll-2012,nonl,Prol}). 
It is called the \emph{Wahlquist-Estabrook prolongation algebra}.
Note that in~\er{wecov} the function $A=A(u_0)$ depends only on~$u_0$.

To study the general case of ZCRs~\er{mnoc},~\er{mnzcr} with arbitrary $\ocs$ 
for any equation~\er{eveq_intr}, 
we need to consider gauge transformations. 

Without loss of generality, one can assume that $\mg$ is a Lie subalgebra 
of $\gl_\sm$ for some $\sm\in\zsp$, where $\gl_\sm$ is the algebra of 
$\sm\times\sm$ matrices with entries from $\mathbb{R}$ or $\mathbb{C}$. 
So our considerations are applicable to both cases $\gl_\sm=\gl_\sm(\mathbb{R})$ 
and $\gl_\sm=\gl_\sm(\mathbb{C})$.
And we denote by $\mathrm{GL}_\sm$ the group of invertible $\sm\times\sm$ matrices.

Let $\fik$ be either $\Com$ or $\mathbb{R}$.
Then $\gl_\sm=\gl_\sm(\fik)$ and $\mathrm{GL}_\sm=\mathrm{GL}_\sm(\fik)$.
In this paper, all algebras are supposed to be over the field~$\fik$.

\begin{remark}
\lb{als}
So we suppose that functions $A$, $B$ in~\er{mnzcr} 
take values in $\mg\subset\gl_\sm$.
Then condition~\er{mnzcr} implies that the auxiliary linear system 
$$
\pd_x(W)=-AW,\qquad\quad
\pd_t(W)=-BW
$$
is compatible modulo~\er{eveq_intr}.
Here $W=W(x,t)$ is an invertible $\sm\times\sm$ matrix-function.
\end{remark}

Let $\mathcal{G}\subset\mathrm{GL}_\sm$ be the connected matrix Lie group 
corresponding to the Lie algebra $\mg\subset\gl_\sm$.
(That is, $\mathcal{G}$ is the connected 
immersed Lie subgroup of $\mathrm{GL}_\sm$ 
corresponding to the Lie subalgebra $\mg\subset\gl_\sm$.)
A \emph{gauge transformation} is given by a matrix-function 
$G=G(x,t,u_0,u_1,\dots,u_l)$ with values in~$\mathcal{G}$.

For any ZCR~\er{mnoc},~\er{mnzcr} and 
any gauge transformation $G=G(x,t,u_0,\dots,u_l)$, the functions 
\beq
\lb{mnprint}
\tilde{A}=GAG^{-1}-D_x(G)\cdot G^{-1},\qquad\qquad
\tilde{B}=GBG^{-1}-D_t(G)\cdot G^{-1}
\ee
satisfy $D_x(\tilde{B})-D_t(\tilde{A})+[\tilde{A},\tilde{B}]=0$ and, therefore, form a ZCR.
Moreover, since $A$, $B$ take values in~$\mg$ and $G$ 
takes values in~$\mathcal{G}$, 
the functions $\tilde{A}$, $\tilde{B}$ take values in~$\mg$.
(This is well known, but for completeness we prove this in Lemma~\ref{lemgt}.)

The ZCR~\er{mnprint} is said to be \emph{gauge equivalent} to the ZCR~\er{mnoc},~\er{mnzcr}. 
For a given equation~\er{eveq_intr}, formulas~\er{mnprint} determine an action of the group 
of gauge transformations on the set of ZCRs of this equation.

\begin{remark}
\lb{rogt}
So we study gauge transformations with values in~$\mathcal{G}$.
Alternatively, one can take some other Lie group 
$\tilde{\mathcal{G}}\subset\mathrm{GL}_\sm$ whose Lie algebra is $\mg$
and consider gauge transformations with values in~$\tilde{\mathcal{G}}$.
The results of this paper will remain valid, if one replaces 
$\mathcal{G}$ by $\tilde{\mathcal{G}}$ everywhere.
\end{remark}

The WE method does not use gauge transformations in a systematic way. 
In the classification of ZCRs~\er{wecov} this is acceptable, 
because the class of ZCRs~\er{wecov} is relatively small.  

The class of ZCRs~\er{mnoc},~\er{mnzcr} is much larger than that of~\er{wecov}.
Gauge transformations play a very important role in the classification 
of ZCRs~\er{mnoc},~\er{mnzcr}. 
Because of this, the classical WE method does not produce satisfactory results 
for~\er{mnoc},~\er{mnzcr}, especially in the case~$\ocs>0$.  

To overcome this problem,
we find a normal form for ZCRs~\er{mnoc},~\er{mnzcr}
with respect to the action of the group of gauge transformations.
Using the normal form of ZCRs, for any given equation~\er{eveq_intr},
we define a Lie algebra $\fds^{\ocs}$ for each $\ocs\in\zp$ so  
that the following property holds. 

For every finite-dimensional Lie algebra $\mg$, 
any $\mg$-valued ZCR~\er{mnoc},~\er{mnzcr} of order~$\le\ocs$  
is locally gauge equivalent to the ZCR arising from a homomorphism 
$\fds^{\ocs}\to\mg$. 

More precisely, as is discussed below, 
we define a Lie algebra $\fds^{\ocs}$ for each $\ocs\in\zp$ 
and each point~$a$ of the infinite prolongation~$\CE$ of equation~\er{eveq_intr}. 
So the full notation for the algebra is $\fds^{\ocs}(\CE,a)$.  
(The family of Lie algebras $\fds(\CE)$ mentioned in the abstract 
of this paper consists of the algebras $\fds^{\ocs}(\CE,a)$ 
for all $\ocs\in\zp$, $a\in\CE$.)

Recall that the \emph{infinite prolongation} $\CE$ of equation~\er{eveq_intr} 
is an infinite-dimensional manifold with the coordinates 
$x$, $t$, $u_k$ for $k\in\zp$.
The precise definitions of the manifold $\CE$ and the algebras $\fds^{\ocs}(\CE,a)$ 
for any equation~\er{eveq_intr} are presented in Section~\ref{csev}. 
For every $\ocs\in\zp$ and $a\in\CE$, 
the algebra $\fds^{\ocs}(\CE,a)$ is defined in terms of generators and relations.
(To clarify the main idea, in Example~\ref{edfoc1} we consider the case $\ocs=1$.)

For every finite-dimensional Lie algebra $\mg$, 
homomorphisms $\fds^{\ocs}(\CE,a)\to\mg$ classify (up to gauge equivalence) 
all $\mg$-valued ZCRs~\er{mnoc},~\er{mnzcr} of order~$\le\ocs$, 
where functions $A$, $B$ are defined on a neighborhood of 
the point $a\in\CE$. See Section~\ref{csev} for details. 

According to Section~\ref{csev}, 
the algebras $\fds^{\ocs}(\CE,a)$ for $\ocs\in\zp$ 
are arranged in a sequence of surjective homomorphisms 
\beq
\lb{intfdoc1}
\dots\to\fds^{\ocs}(\CE,a)\to\fds^{\ocs-1}(\CE,a)\to\dots\to\fds^1(\CE,a)\to\fds^0(\CE,a).
\ee
According to Theorem~\ref{thzcrfd}, for each $\oc\in\zsp$, 
the algebra $\fds^{\ocs}(\CE,a)$ is responsible for ZCRs of order $\le\ocs$, 
and the algebra $\fds^{\ocs-1}(\CE,a)$ is responsible for ZCRs of order $\le\ocs-1$.    
The surjective homomorphism $\fds^{\ocs}(\CE,a)\to\fds^{\ocs-1}(\CE,a)$ in~\er{intfdoc1}
reflects the fact that any ZCR of order $\le\ocs-1$ is at the same time of order~$\le\ocs$.
The homomorphism $\fds^{\ocs}(\CE,a)\to\fds^{\ocs-1}(\CE,a)$ is 
defined by formulas~\er{fdhff}, using generators of the algebras 
$\fds^{\ocs}(\CE,a)$, $\fds^{\ocs-1}(\CE,a)$.

As we show in the preprints~\cite{sint18,sbt18},
using $\fds^{\ocs}(\CE,a)$, 
one obtains some necessary conditions for integrability of equations~\er{eveq_intr}
and necessary conditions for existence of a B\"acklund transformation 
between two given equations. 
To get such results, one needs to study certain properties 
of ZCRs~\er{mnoc},~\er{mnzcr} with arbitrary~$\ocs$, 
and we do this by means of the algebras $\fds^{\ocs}(\CE,a)$.
As explained above, the classical WE method 
(which studies ZCRs of the form~\er{wecov}) is not sufficient for this.

Applications of $\fds^{\ocs}(\CE,a)$ to obtaining 
necessary conditions for integrability of equations~\er{eveq_intr} 
are presented in~\cite{sint18}. 
Examples of the use of these conditions in 
proving non-integrability for some equations of order~$5$
are presented in~\cite{sint18} as well.
Applications to obtaining non-existence results for B\"acklund transformations
between two given equations are described in~\cite{sbt18}. 

In this paper and in~\cite{sint18,sbt18} 
we present also a number of results on the structure 
of the algebras $\fds^{\ocs}(\CE,a)$ 
for some classes of scalar evolution equations of orders~$3$,~$5$,~$7$ 
and concrete examples.
In particular, the KdV equation is considered in Theorem~\ref{thkdvm} in this paper.
The Krichever-Novikov equation is discussed in~\cite{sbt18}.
In~\cite{sint18,sbt18} we study also the algebras $\fds^{\ocs}(\CE,a)$ 
and integrability properties for a parameter-dependent $5$th-order scalar evolution equation,
which was considered by A.~P.~Fordy~\cite{fordy-hh} 
in connection with the H\'enon-Heiles system. 
The problem to study this equation was suggested to us by A.~P.~Fordy.

Relations of the algebras $\fds^{\ocs}(\CE,a)$ with parameter-dependent ZCRs
are discussed in~\cite{sint18}.

We suppose that the variables $x$, $t$, $u_k$ take values in $\fik$. 
A point $a\in\CE$ is determined by the values of the coordinates 
$x$, $t$, $u_k$ at $a$. Let
\begin{equation}
\notag
a=(x=x_a,\,t=t_a,\,u_k=a_k)\,\in\,\CE,\qquad\qquad x_a,\,t_a,\,a_k\in\fik,\qquad k\in\zp,
\end{equation}
be a point of $\CE$.
In other words, the constants $x_a$, $t_a$, $a_k$ are the coordinates 
of the point $a\in\CE$ in the coordinate system $x$, $t$, $u_k$.

\begin{example}
\lb{edfoc1}
To clarify the definition of $\fds^{\ocs}(\CE,a)$, 
let us consider the case $\ocs=1$. 
To this end, 
we fix an equation~\er{eveq_intr} and study ZCRs of order~$\le 1$ of this equation. 

According to Theorem~\ref{evcov}, any ZCR of order~$\le 1$ 
\beq
\lb{zcru1}
A=A(x,t,u_0,u_1),\qquad B=B(x,t,u_0,u_1,\dots,u_{\eo}),\qquad
D_x(B)-D_t(A)+[A,B]=0
\ee
on a neighborhood of $a\in\CE$ is gauge equivalent to a ZCR of the form 
\begin{gather}
\lb{nfzcr}
\tilde{A}=\tilde{A}(x,t,u_0,u_1),\qquad \tilde{B}=\tilde{B}(x,t,u_0,u_1,\dots,u_{\eo}),\\
\lb{nfzcreq}
D_x(\tilde{B})-D_t(\tilde{A})+[\tilde{A},\tilde{B}]=0,\\
\lb{nfab}
\frac{\pd\tilde{A}}{\pd u_1}(x,t,u_0,a_1)=0,\qquad 
\tilde{A}(x,t,a_0,a_1)=0,\qquad\tilde{B}(x_a,t,a_0,a_1,\dots,a_{\eo})=0. 
\end{gather}
Moreover, according to Theorem~\ref{tguniq},
for any given ZCR of the form~\er{zcru1}, 
on a neighborhood of $a\in\CE$ there is a unique 
gauge transformation $G=G(x,t,u_0,\dots,u_l)$ such that
the functions $\tilde{A}=GAG^{-1}-D_x(G)\cdot G^{-1}$,
$\tilde{B}=GBG^{-1}-D_t(G)\cdot G^{-1}$ 
satisfy~\er{nfzcr},~\er{nfzcreq},~\er{nfab}  
and $G(x_a,t_a,a_0,\dots,a_l)=\mathrm{Id}$, 
where $\mathrm{Id}\in\mathrm{GL}_\sm$ is the identity matrix.
Therefore, we can say that properties~\er{nfab} determine a 
normal form for ZCRs~\er{zcru1} 
with respect to the action of the group of gauge transformations 
on a neighborhood of $a\in\CE$.

A similar normal form for ZCRs~\er{mnoc},~\er{mnzcr} 
with arbitrary $\ocs$ is described in Theorem~\ref{evcov} 
and Remark~\ref{rnfzcr}.

Since the functions $\tilde{A}$, $\tilde{B}$ from~\er{nfzcr},~\er{nfab} 
are analytic on a neighborhood of $a\in\CE$, these functions
are represented as absolutely convergent power series
\begin{gather}
\label{aser1}
\tilde{A}=\sum_{l_1,l_2,i_0,i_1\ge 0} 
(x-x_a)^{l_1} (t-t_a)^{l_2}(u_0-a_0)^{i_0}(u_1-a_1)^{i_1}\cdot
\tilde{A}^{l_1,l_2}_{i_0,i_1},\\
\lb{bser1}
\tilde{B}=\sum_{l_1,l_2,j_0,\dots,j_{\eo}\ge 0} 
(x-x_a)^{l_1} (t-t_a)^{l_2}(u_0-a_0)^{j_0}\dots(u_{\eo}-a_{\eo})^{j_{\eo}}\cdot
\tilde{B}^{l_1,l_2}_{j_0\dots j_{\eo}}.
\end{gather}
Here $\tilde{A}^{l_1,l_2}_{i_0,i_1}$ and $\tilde{B}^{l_1,l_2}_{j_0\dots j_{\eo}}$ 
are elements of a Lie algebra, which we do not specify yet. 

Using formulas~\er{aser1},~\er{bser1}, we see that properties~\er{nfab} are equivalent to 
\beq
\lb{ab000int}
\tilde{A}^{l_1,l_2}_{i_0,1}=
\tilde{A}^{l_1,l_2}_{0,0}=
\tilde{B}^{0,l_2}_{0\dots 0}=0
\qquad\qquad\forall\,l_1,l_2,i_0\in\zp.
\ee
To define $\fds^1(\CE,a)$, we regard $\tilde{A}^{l_1,l_2}_{i_0,i_1}$, 
$\tilde{B}^{l_1,l_2}_{j_0\dots j_{\eo}}$ from~\er{aser1},~\er{bser1} 
as abstract symbols.  
By definition, the Lie algebra $\fds^1(\CE,a)$ is generated by the symbols 
$\tilde{A}^{l_1,l_2}_{i_0,i_1}$, $\tilde{B}^{l_1,l_2}_{j_0\dots j_{\eo}}$
for $l_1,l_2,i_0,i_1,j_0,\dots,j_{\eo}\in\zp$.
Relations for these generators are provided by equations~\er{nfzcreq},~\er{ab000int}. 
A more detailed description of this construction 
is given in Section~\ref{csev}.
\end{example}

As discussed above, the algebra~$\fds^{\ocs}(\CE,a)$ is defined 
by a certain set of generators and relations arising from 
a normal form of ZCRs. 
In Theorem~\ref{lemgenfdq} 
we describe a smaller subset of generators for~$\fds^{\ocs}(\CE,a)$. 
\begin{example}
Consider the case $\ocs=1$. 
According to the above definition of~$\fds^1(\CE,a)$, 
the algebra $\fds^1(\CE,a)$ is given by the generators  
$\tilde{A}^{l_1,l_2}_{i_0,i_1}$, $\tilde{B}^{l_1,l_2}_{j_0\dots j_{\eo}}$
and the relations arising from~\er{nfzcreq},~\er{ab000int}.
According to Theorem~\ref{lemgenfdq}, the algebra 
$\fds^1(\CE,a)$ coincides with the subalgebra generated by 
$\tilde{A}^{l_1,0}_{i_0,i_1}$ for $l_1,i_0,i_1\in\zp$, 
and a similar result is valid also for $\fds^{\ocs}(\CE,a)$ for every $\ocs$. 
\end{example}

This result helps us to describe the structure of $\fds^{\ocs}(\CE,a)$ 
and the homomorphisms~\er{intfdoc1} more explicitly for some PDEs.
Consider equations of the form 
\beq
\lb{utukd}
u_t=u_{2\kd+1}+f(x,t,u_0,u_1,\dots,u_{2\kd-1}),\qquad\qquad \kd\in\{1,2,3\},
\ee
where $f$ is an arbitrary function. 
Examples of such PDEs include 
\begin{itemize}
\item the KdV equation $u_t=u_3+u_0u_1$, 
\item the Kaup-Kupershmidt equation~\cite{kaup80} $u_t=u_5+10u_0u_3+25u_1u_2+20u_0^2u_1$, 
\item the Sawada-Kotera equation~\cite{sk74} $u_t=u_5+5u_0u_3+5u_1u_2+5u_0^2u_1$
(which is sometimes called the Caudrey-Dodd-Gibbon equation~\cite{cdg76}).
\end{itemize}
Many more examples of integrable PDEs of this type 
can be found in~\cite{mesh-sok13,mikh91} and references therein.

Equations of the form~\er{utukd} are considered in Theorem~\ref{kntkdvtypeth}, 
which is proved in~\cite{sint18}.
Theorem~\ref{kntkdvtypeth} implies that, 
for any such equation with $\kd\in\{1,2,3\}$, 
\begin{itemize}
\item for every $\ocs\ge\kd+\delta_{\kd,3}$ the algebra 
$\fds^\ocs(\CE,a)$ is obtained from $\fds^{\ocs-1}(\CE,a)$ 
by central extension, 
\item for every $\ocs\ge\kd+\delta_{\kd,3}$ the algebra $\fds^\ocs(\CE,a)$ 
is obtained from $\fds^{\kd-1+\delta_{\kd,3}}(\CE,a)$ 
by applying several times the operation of central extension. 
\end{itemize}
Here $\delta_{\kd,3}$ is the Kronecker delta. 
So $\delta_{3,3}=1$, and $\delta_{\kd,3}=0$ if $\kd\neq 3$.

Applications of Theorem~\ref{kntkdvtypeth} to obtaining some necessary conditions 
for integrability of equations~\er{utukd} are described in~\cite{sint18}.
Results similar to Theorem~\ref{kntkdvtypeth} can be proved for many 
other evolution PDEs as well. For instance, in~\cite{sbt18} we present 
a similar result for the Krichever-Novikov equation.

Let $\bl$, $\bl_1$, $\bl_2$ be Lie algebras. 
One says that \emph{$\bl_1$ is obtained from $\bl$ by central extension} 
if there is an ideal $\mathfrak{I}\subset\bl_1$ such that 
$\mathfrak{I}$ is contained in the center of $\bl_1$ and $\bl_1/\mathfrak{I}\cong\bl$. 
Note that $\mathfrak{I}$ may be of arbitrary dimension. 

We say that \emph{$\bl_2$ is obtained from $\bl$ 
by applying several times the operation of central extension} 
if there is a finite collection of Lie algebras $\mg_0,\mg_1,\dots,\mg_k$ such 
that $\mg_0\cong\bl$, $\mg_k\cong\bl_2$ and 
$\mg_i$ is obtained from $\mg_{i-1}$ by central extension for each  
$i=1,\dots,k$. 

Consider the infinite-dimensional Lie algebra 
$\msl_2(\fik[\la])\cong \msl_2(\fik)\otimes_{\fik}\fik[\lambda]$, 
where $\fik[\lambda]$ is the algebra of polynomials in $\la$.
(If we regard $\fik$ as a rational algebraic curve with coordinate~$\la$, 
the elements of~$\msl_2(\fik[\la])$ can be identified with polynomial  
$\msl_2(\fik)$-valued functions on this rational curve.)
For the KdV equation, in Lemma~\ref{lf0kdv} we prove that 
$\fds^0(\CE,a)$ is isomorphic to the direct sum 
of $\msl_2(\fik[\la])$ and a $3$-dimensional abelian Lie algebra.

To obtain this result, we use the following fact. 
If the function $F$ in~\er{eveq_intr} does not depend on $x$, $t$, 
then the algebra $\fds^{0}(\CE,a)$ is isomorphic to a certain subalgebra of 
the Wahlquist-Estabrook prolongation algebra for~\er{eveq_intr}
(see Theorem~\ref{thmhfd0} for details).

The explicit structure of the Wahlquist-Estabrook prolongation algebra 
for the KdV equation 
is given in~\cite{kdv,kdv1} and contains $\msl_2(\fik[\la])$.
This helps us to describe $\fds^0(\CE,a)$ for KdV in Lemma~\ref{lf0kdv}.
Then Theorem~\ref{kntkdvtypeth} implies that, for every $\ocs\in\zsp$, 
the algebra $\fds^\ocs(\CE,a)$ for KdV  
is obtained from $\msl_2(\fik[\la])$ by applying several times 
the operation of central extension. See Theorem~\ref{thkdvm} for more details.

For the Krichever-Novikov equation, 
in~\cite{sbt18} we show that some infinite-dimensional Lie algebra 
of certain matrix-valued functions on an elliptic curve, 
which arises from the elliptic ZCR~\cite{krich80,novikov99} of this equation, 
plays the main role in the description of $\fds^\ocs(\CE,a)$.

Somewhat similar (but not the same) ideas on ZCRs and 
B\"acklund transformations were considered by one of us in~\cite{cfa}, 
mostly for a few scalar evolution PDEs of order~$3$.
As we show in~\cite{sint18,sbt18}, the theory of this paper 
has more applications than that of~\cite{cfa}.


For the Burgers and KdV equations, ZCRs of the form
\beq
\lb{covscal}
A=A(u_0,u_1,u_2,\dots),\qquad B=B(u_0,u_1,u_2,\dots),\qquad D_x(B)-D_t(A)+[A,B]=0 
\ee
(where $A$ and $B$ may depend on any finite number of the coordinates $u_k$) 
were studied in~\cite{finley93}. 
However, gauge transformations were not considered in~\cite{finley93}. 
Because of this, the paper~\cite{finley93} had
to impose some additional constraints on the functions $A$, $B$ in~\er{covscal}. 

\section{ZCRs, gauge transformations, and the algebras $\fds^\ocs(\CE,a)$}
\lb{csev}

Recall that $x$, $t$, $u_k$ take values in $\fik$, 
where $\fik$ is either $\Com$ or $\mathbb{R}$.
Let $\fik^\infty$ be the infinite-dimensional space  
with the coordinates $x$, $t$, $u_k$ for $k\in\zp$. 
The topology on $\fik^\infty$ is defined as follows. 

For each $l\in\zp$, consider the space $\fik^{l+3}$ 
with the coordinates $x$, $t$, $u_k$ for $k\le l$. 
One has the natural projection $\pi_l\cl\fik^\infty\to\fik^{l+3}$ that ``forgets'' 
the coordinates $u_{k'}$ for $k'>l$. 

Since $\fik^{l+3}$ is a finite-dimensional vector space, 
we have the standard topology on~$\fik^{l+3}$. 
For any $l\in\zp$ and any open subset $V\subset\fik^{l+3}$, 
the subset~$\pi_l^{-1}(V)\subset\fik^\infty$ 
is, by definition, open in $\fik^\infty$. 
Such subsets form a base of the topology on~$\fik^\infty$. 
In other words, we consider the smallest topology on~$\fik^\infty$ such that 
the maps $\pi_l$, $l\in\zp$, are continuous. 

Let $\ost\subset\fik^{\eo+3}$ be an open subset such that the function 
$F(x,t,u_0,u_1,\dots,u_{\eo})$ from~\er{eveq_intr} is defined on~$\ost$. 
The \emph{infinite prolongation} $\CE$ of equation~\er{eveq_intr} 
is defined as follows $\CE=\pi_{\eo}^{-1}(\ost)\subset\fik^\infty$.
So $\CE$ is an open subset of the space $\fik^\infty$  
with the coordinates $x$, $t$, $u_k$ for $k\in\zp$. 
The topology on~$\CE$ is induced by the embedding $\CE\subset\fik^\infty$. 

A point $a\in\CE$ is determined by the values of $x$, $t$, $u_k$ at $a$. Let
\begin{equation}
\lb{pointevs}
a=(x=x_a,\,t=t_a,\,u_k=a_k)\,\in\,\CE,\qquad\qquad x_a,\,t_a,\,a_k\in\fik,\qquad k\in\zp,
\end{equation}
be a point of $\CE$.
The constants $x_a$, $t_a$, $a_k$ are the coordinates 
of the point $a\in\CE$ in the coordinate system $x$, $t$, $u_k$.

We continue to use the notations introduced in Section~\ref{subsint1}.
In particular, $\mg\subset\gl_\sm$ is a matrix Lie algebra, 
and $\mathcal{G}\subset\mathrm{GL}_\sm$ is the connected matrix Lie group 
corresponding to $\mg$, where $\sm\in\zsp$.

For any $l\in\zp$, a matrix-function $G=G(x,t,u_0,u_1,\dots,u_l)$ 
with values in~$\mathcal{G}$ is called a \emph{gauge transformation}.
Equivalently, one can say that a gauge transformation is given 
by a $\mathcal{G}$-valued function $G=G(x,t,u_0,\dots,u_l)$.
See also Remark~\ref{rogt} about gauge transformations with values 
in other matrix Lie groups.

In this section, when we speak about ZCRs, we always 
mean ZCRs of equation~\er{eveq_intr}.
For each $i=1,2$, let 
\beq
\notag
A_i=A_i(x,t,u_0,u_1,\dots),\quad 
B_i=B_i(x,t,u_0,u_1,\dots),\quad
D_x(B_i)-D_t(A_i)+[A_i,B_i]=0
\ee
be a $\mg$-valued ZCR.
The ZCR $A_1,B_1$ is said to be \emph{gauge equivalent} 
to the ZCR $A_2,B_2$ if there is a gauge transformation $G=G(x,t,u_0,\dots,u_l)$ 
such that 
$$
A_1=GA_2G^{-1}-D_x(G)\cdot G^{-1},\qquad\qquad
B_1=GB_2G^{-1}-D_t(G)\cdot G^{-1}.
$$

The following lemma is known, but for completeness we present a proof of it.
\begin{lemma}
\lb{lemgt}
Let 
\beq
\lb{lemab}
A=A(x,t,u_0,u_1,\dots,u_\oc),\quad 
B=B(x,t,u_0,u_1,\dots,u_{\oc+\eo-1}),\quad
D_x(B)-D_t(A)+[A,B]=0
\ee
be a ZCR of order $\le\oc$ for some $\oc\in\zp$ such that 
the functions $A$, $B$ take values in $\mg$.
Here $D_x$ and $D_t$ are given by~\er{evdxdt}.

Then for any $\mathcal{G}$-valued function 
\beq
\lb{ggocs1}
G=G(x,t,u_0,u_1,\dots,u_{\oc-1})
\ee
depending on $x$, $t$, $u_0,\dots,u_{\oc-1}$, the functions 
\beq
\lb{lemtab}
\tilde{A}=GAG^{-1}-D_x(G)\cdot G^{-1},\qquad\qquad
\tilde{B}=GBG^{-1}-D_t(G)\cdot G^{-1}
\ee
form a $\mg$-valued ZCR of order $\le\oc$. 
That is, 
\beq
\lb{lemtzcr}
\tilde{A}=\tilde{A}(x,t,u_0,u_1,\dots,u_\oc),\quad 
\tilde{B}=\tilde{B}(x,t,u_0,u_1,\dots,u_{\oc+\eo-1}),\quad
D_x(\tilde{B})-D_t(\tilde{A})+[\tilde{A},\tilde{B}]=0,
\ee
and $\tilde{A}$, $\tilde{B}$ take values in $\mg$.
Formulas~\er{lemtab} determine an action of the group 
of $\mathcal{G}$-valued gauge transformations~\er{ggocs1}
on the set of $\mg$-valued ZCRs of order~$\le\oc$.
\end{lemma}
\begin{proof}
Since $A$, $B$ take values in $\mg$ and $G$ takes values in 
the connected Lie group $\mathcal{G}\subset\mathrm{GL}_\sm$
corresponding to the Lie algebra $\mg\subset\gl_\sm$,
the functions 
\beq
\lb{gagb}
GAG^{-1},\quad GBG^{-1},\quad\frac{\pd}{\pd x}(G)\cdot G^{-1},\quad
\frac{\pd}{\pd t}(G)\cdot G^{-1},\quad\frac{\pd}{\pd u_k}(G)\cdot G^{-1}\qquad
\forall\,k
\ee
take values in $\mg$. 
Hence the functions $\tilde{A}$, $\tilde{B}$ given by~\er{lemtab} 
take values in $\mg$ as well.
Using formulas~\er{evdxdt}, \er{lemab}, \er{lemtab} 
and the fact that $G$ may depend only on $x$, $t$, $u_0,\dots,u_{\oc-1}$, 
we easily get~\er{lemtzcr}. 

One has $D_x+\tilde{A}=G(D_x+A)G^{-1}$ and $D_x+\tilde{B}=G(D_t+B)G^{-1}$, 
which implies that formulas~\er{lemtab} determine an action of the group 
of $\mathcal{G}$-valued gauge transformations~\er{ggocs1}
on the set of $\mg$-valued ZCRs of order~$\le\oc$.
\end{proof}

\begin{remark}
\lb{anmer}
For any $l\in\zp$, 
when we consider a function $Q=Q(x,t,u_0,u_1,\dots,u_l)$ 
defined on a neighborhood of $a\in\CE$, 
we always assume that the function is analytic on this neighborhood.
For example, $Q$ may be a meromorphic function defined on an open subset of~$\CE$
such that $Q$ is analytic on a neighborhood of $a\in\CE$.
In particular, this applies to the functions $A$, $B$ 
considered in Theorem~\ref{evcov} below.
\end{remark}

Let $s\in\zp$. For a function $M=M(x,t,u_0,u_1,u_2,\dots)$, the notation 
$M\,\Big|_{u_k=a_k,\ k\ge s}$ 
means that we substitute $u_k=a_k$ for all $k\ge s$ in the function $M$. 
Also, sometimes we substitute $x=x_a$ or $t=t_a$ in such functions. 
For example, if $M=M(x,t,u_0,u_1,u_2,u_3)$, then 
$$
M\,\Big|_{x=x_a,\ u_k=a_k,\ k\ge 2}=M(x_a,t,u_0,u_1,a_2,a_3).
$$

\begin{theorem}
\lb{evcov}
Let $\mg\subset\gl_\sm$ be a matrix Lie algebra and  
$\mathcal{G}\subset\mathrm{GL}_\sm$ be 
the connected matrix Lie group corresponding to~$\mg$, 
where $\sm\in\zsp$. Let 
\beq
\lb{thmnoc}
A=A(x,t,u_0,u_1,\dots,u_\ocs),\quad 
B=B(x,t,u_0,u_1,\dots,u_{\ocs+\eo-1}),\quad
D_x(B)-D_t(A)+[A,B]=0
\ee
be a ZCR of order $\le\ocs$ for some $\ocs\in\zp$ 
such that the functions $A$, $B$ are defined 
on a neighborhood of $a\in\CE$ and take values in~$\mg$.

Then there is a $\mathcal{G}$-valued function $G=G(x,t,u_0,u_1,\dots,u_{\ocs-1})$ 
on a neighborhood of $a\in\CE$ such that the functions 
\beq
\lb{mnprth}
\tilde{A}=GAG^{-1}-D_x(G)\cdot G^{-1},\qquad\qquad
\tilde{B}=GBG^{-1}-D_t(G)\cdot G^{-1}
\ee
satisfy 
\begin{gather}
\label{d=0}
\frac{\pd \tilde{A}}{\pd u_s}\,\,\bigg|_{u_k=a_k,\ k\ge s}=0\qquad\quad
\forall\,s\ge 1,\\
\lb{aukak}
\tilde{A}\,\Big|_{u_k=a_k,\ k\ge 0}=0,\\
\lb{bxx0}
\tilde{B}\,\Big|_{x=x_a,\ u_k=a_k,\ k\ge 0}=0,
\end{gather}
and one has
\beq
\lb{gaid}
G\,\Big|_{x=x_a,\ t=t_a,\ u_k=a_k,\ k\ge 0}=\mathrm{Id}.
\ee

Note that, according to Lemma~\ref{lemgt}, the functions~\er{mnprth}
form a $\mg$-valued ZCR of order~$\le\ocs$. That is, 
\begin{gather}
\lb{tatbth}
\tilde{A}=\tilde{A}(x,t,u_0,u_1,\dots,u_\ocs),\quad\qquad 
\tilde{B}=\tilde{B}(x,t,u_0,u_1,\dots,u_{\ocs+\eo-1}),\\
\lb{zcrtmn}
D_x(\tilde{B})-D_t(\tilde{A})+[\tilde{A},\tilde{B}]=0,
\end{gather}
and $\tilde{A}$, $\tilde{B}$ take values in $\mg$.
Furthermore, in Theorem~\ref{tguniq} below we will show 
that a $\mathcal{G}$-valued function $G$
satisfying the above properties is unique. 
\end{theorem}
\begin{proof}
To explain the main idea, let us consider first the case $\ocs=2$. 
So $A=A(x,t,u_0,u_1,u_2)$. 

Consider the ordinary differential equation (ODE)
\beq
\lb{pdgquq}
\frac{\pd G_1}{\pd u_{1}}=G_1\cdot
\bigg(
\frac{\pd A}{\pd u_{2}}\,\,\bigg|_{u_k=a_k,\ k\ge 2}\bigg)
\ee
with respect to the variable $u_{1}$ and an unknown function   
$G_1=G_1(x,t,u_0,u_1)$. The variables $x,t,u_0$ 
are regarded as parameters in this ODE. 

Let $G_1(x,t,u_0,u_1)$ be a local 
solution of the ODE~\er{pdgquq} with the initial condition
$G_1(x,t,u_0,a_1)=\mathrm{Id}$. 
Since $\pd A/\pd u_{2}$ takes values in $\mg$, 
the function $G_1(x,t,u_0,u_1)$ takes values in $\mathcal{G}$.
Set 
\beq
\lb{a11}
\hat{A}=G_1AG_1^{-1}-D_x(G_1)\cdot G_1^{-1},\qquad\qquad
\hat{B}=G_1BG_1^{-1}-D_t(G_1)\cdot G_1^{-1}.
\ee
As $G_1$ takes values in $\mathcal{G}$, 
the functions $\hat{A}$, $\hat{B}$ take values in $\mg$.
Using~\er{a11} and~\er{pdgquq}, we get  
\begin{multline}
\lb{multgag}
\frac{\pd \hat{A}}{\pd u_2}\,\,\bigg|_{u_k=a_k,\ k\ge 2}=
G_1\bigg(
\frac{\pd A}{\pd u_{2}}\,\,\bigg|_{u_k=a_k,\ k\ge 2}\bigg)G_1^{-1}
-\bigg(
\frac{\pd}{\pd u_{2}}\big(D_x(G_1)\big)\,\bigg|_{u_k=a_k,\ k\ge 2}\bigg)G_1^{-1}=\\
=G_1\bigg(
\frac{\pd A}{\pd u_{2}}\,\bigg|_{u_k=a_k,\ k\ge 2}\bigg)G_1^{-1}
-\frac{\pd G_1}{\pd u_{1}}G_1^{-1}=
G_1\bigg(\frac{\pd A}{\pd u_{2}}\,\bigg|_{u_k=a_k,\ k\ge 2}\bigg)G_1^{-1}
-G_1\bigg(\frac{\pd A}{\pd u_{2}}\,\bigg|_{u_k=a_k,\ k\ge 2}\bigg)G_1^{-1}=0.
\end{multline}

Now consider the ODE 
\beq
\lb{pdgquq0}
\frac{\pd G_0}{\pd u_{0}}=G_0\cdot
\bigg(
\frac{\pd\hat{A}}{\pd u_{1}}\,\,\bigg|_{u_k=a_k,\ k\ge 1}\bigg)
\ee
with respect to the variable $u_{0}$ and an unknown function   
$G_0=G_0(x,t,u_0)$, where $x,\,t$ are regarded as parameters. 

Let $G_0(x,t,u_0)$ be a local 
solution of the ODE~\er{pdgquq0} with the initial condition
$G_0(x,t,a_0)=\mathrm{Id}$. 
Since ${\pd\hat{A}}/{\pd u_{1}}$ takes values in $\mg$, 
the function $G_0(x,t,u_0)$ takes values in $\mathcal{G}$.
Set 
\beq
\lb{a22}
\bar{A}=G_0\hat{A}G_0^{-1}-D_x(G_0)\cdot G_0^{-1},\qquad\qquad
\bar{B}=G_0\hat{B}G_0^{-1}-D_t(G_0)\cdot G_0^{-1}.
\ee
Then~\er{multgag},~\er{pdgquq0},~\er{a22} yield
$$
\frac{\pd \bar{A}}{\pd u_s}\,\,\bigg|_{u_k=a_k,\ k\ge s}=0\quad\qquad
\forall\,s\ge 1.
$$
Furthermore, as $G_0$ takes values in~$\mathcal{G}$, 
the functions $\bar{A}$, $\bar{B}$ take values in~$\mg$.

Let $\tilde G=\tilde G(x,t)$ be a local solution of the ODE 
\beq
\notag
\frac{\pd \tilde G}{\pd x}=\tilde G\cdot
\Big(\bar{A}\,\Big|_{u_k=a_k,\ k\ge 0}\Big)
\ee
with the initial condition $\tilde G(x_a,t)=\mathrm{Id}$, 
where $t$ is viewed as a parameter. Set 
\beq
\lb{a33}
\check{A}=\tilde{G}\bar{A}\tilde{G}^{-1}-D_x(\tilde{G})\cdot \tilde{G}^{-1},
\qquad\qquad
\check{B}=\tilde{G}\bar{B}\tilde{G}^{-1}-D_t(\tilde{G})\cdot \tilde{G}^{-1}.
\ee
Then 
$$
\frac{\pd\check{A}}{\pd u_s}\,\,\bigg|_{u_k=a_k,\ k\ge s}=0\qquad
\forall\,s\ge 1,\qquad\qquad
\check{A}\,\Big|_{u_k=a_k,\ k\ge 0}=0.
$$

Now let $\hat G=\hat G(t)$ be a local solution of the ODE 
\beq
\lb{pdhatgt}
\frac{\pd \hat G}{\pd t}=\hat G\cdot
\Big(\check{B}\,\Big|_{x=x_a,\ u_k=a_k,\ k\ge 0}\Big)
\ee
with the initial condition $\hat G(t_a)=\mathrm{Id}$.
Since $\bar{A}$ takes values in~$\mg$, the function $\tilde G$ 
takes values in $\mathcal{G}$.
Then we see that $\check{A}$, $\check{B}$ given by~\er{a33} 
take values in~$\mg$, which implies that $\hat G$ 
takes values in $\mathcal{G}$. 
Set 
\beq
\lb{a44}
\tilde{A}=\hat{G}\check{A}\hat{G}^{-1}-D_x(\hat{G})\cdot \hat{G}^{-1},
\qquad\qquad
\tilde{B}=\hat{G}\check{B}\hat{G}^{-1}-D_t(\hat{G})\cdot \hat{G}^{-1}.
\ee
Then $\tilde{A}$, $\tilde{B}$ obey~\er{d=0},~\er{aukak},~\er{bxx0} 
and take values in~$\mg$. 

Let $G=\hat G\cdot\tilde G\cdot G_0\cdot G_1$. Then 
equations~\er{a11}, \er{a22}, \er{a33}, \er{a44} imply 
\beq
\notag
\tilde{A}=GAG^{-1}-D_x(G)\cdot G^{-1},\qquad\quad
\tilde{B}=GBG^{-1}-D_t(G)\cdot G^{-1}.
\ee
Furthermore, since 
$$
G_1(x,t,u_0,a_1)=G_0(x,t,a_0)=\tilde G(x_a,t)=\hat G(t_a)=\mathrm{Id},
$$
we have $G(x_a,t_a,a^1_0,a^2_0)=\mathrm{Id}$.
Thus $G=\hat G\cdot\tilde G\cdot G_0\cdot G_1$ satisfies 
all the required properties in the case $\ocs=2$. 

This construction can be easily generalized to the case of arbitrary~$\ocs$. 
One can define $G$ as the product 
$G=\hat G\cdot\tilde G\cdot G_0\cdot G_1\dots G_{\ocs-1}$, 
where the $\mathcal{G}$-valued functions
\begin{gather*}
G_q=G_q(x,t,u_0,\dots,u_q),\qquad q=0,1,\dots,\ocs-1,\qquad\quad 
\tilde G=\tilde G(x,t),\qquad\quad\hat G=\hat G(t)
\end{gather*}
are defined as solutions of certain ODEs similar to the ODEs considered above. 
\end{proof}

Fix a point $a\in\CE$ given by~\er{pointevs}, 
which is determined by constants $x_a$, $t_a$, $a_k$.

A ZCR 
\beq
\lb{anzcr}
\anA=\anA(x,t,u_0,u_1,\dots),\qquad 
\anB=\anB(x,t,u_0,u_1,\dots),\qquad
D_x(\anB)-D_t(\anA)+[\anA,\anB]=0
\ee
is said to be \emph{$a$-normal} if $\anA$, $\anB$ satisfy the following equations
\begin{gather}
\label{agd=0}
\frac{\pd \anA}{\pd u_s}\,\,\bigg|_{u_k=a_k,\ k\ge s}=0\qquad\quad
\forall\,s\ge 1,\\
\lb{agaukak}
\anA\,\Big|_{u_k=a_k,\ k\ge 0}=0,\\
\lb{agbxx0}
\anB\,\Big|_{x=x_a,\ u_k=a_k,\ k\ge 0}=0.
\end{gather}

\begin{remark}
\lb{rnfzcr}
For example, the ZCR $\tilde{A},\tilde{B}$ 
described in Theorem~\ref{evcov} is 
$a$-normal, because $\tilde{A}$, $\tilde{B}$ obey \er{d=0}, 
\er{aukak}, \er{bxx0}.
Theorem~\ref{evcov} implies that any ZCR on a neighborhood of $a\in\CE$
is gauge equivalent to an $a$-normal ZCR.
Therefore, we can say that 
properties \er{agd=0}, \er{agaukak}, \er{agbxx0} 
determine a normal form for ZCRs
with respect to the action of the group of gauge transformations 
on a neighborhood of $a\in\CE$.
\end{remark}

Analyzing properties \er{agd=0}, \er{agaukak}, \er{agbxx0} 
of $a$-normal ZCRs, it is easy to prove the following lemma.
\begin{lemma}
\lb{lange}
Let $\oc_1,\oc_2\in\zp$. 
For each $i=1,2$, let 
\beq
\notag
\anA_i=\anA_i(x,t,u_0,\dots,u_{\oc_i}),\quad 
\anB_i=\anB_i(x,t,u_0,\dots,u_{\oc_i+\eo-1}),\quad
D_x(\anB_i)-D_t(\anA_i)+[\anA_i,\anB_i]=0
\ee
be an $a$-normal ZCR of order~$\le\oc_i$ such that 
the functions $\anA_i$, $\anB_i$ are defined 
on a neighborhood of $a\in\CE$ and take values in~$\mg$.

Suppose that on a neighborhood of $a\in\CE$ 
there is a function $\bG=\bG(x,t,u_0,\dots,u_l)$ 
with values in $\mathcal{G}$ such that
\beq
\notag
\anA_1=\bG\anA_2\bG^{-1}-D_x(\bG)\cdot\bG^{-1},\qquad\qquad
\anB_1=\bG\anB_2\bG^{-1}-D_t(\bG)\cdot\bG^{-1}.
\ee
In other words, we suppose that the $a$-normal ZCR $\anA_1,\anB_1$ is 
gauge equivalent to the $a$-normal ZCR $\anA_2,\anB_2$ 
by means of a gauge transformation $\bG=\bG(x,t,u_0,\dots,u_l)$.

Then the function $\bG$ is actually a constant element of the group $\mathcal{G}$
\textup{(}that is, $\bG$ does not depend on $x$, $t$, $u_k$\textup{)}, and we have
\beq
\notag
\anA_1=\bG\anA_2\bG^{-1},\qquad\qquad
\anB_1=\bG\anB_2\bG^{-1}.
\ee
\end{lemma}

\begin{theorem}
\lb{tguniq}
We use here the notations introduced in Theorem~\ref{evcov}.
Let 
\beq
\lb{guzcr}
A=A(x,t,u_0,\dots,u_\ocs),\quad 
B=B(x,t,u_0,\dots,u_{\ocs+\eo-1}),\quad
D_x(B)-D_t(A)+[A,B]=0
\ee
be a ZCR of order~$\le\ocs$ such that 
the functions $A$, $B$ are defined 
on a neighborhood of $a\in\CE$ and take values in~$\mg$.

Then on a neighborhood of $a\in\CE$ 
there is a unique gauge transformation $G=G(x,t,u_0,\dots,u_l)$ 
such that $G(a)=\mathrm{Id}$ and the functions 
\beq
\lb{tatbgu}
\tilde{A}=GAG^{-1}-D_x(G)\cdot G^{-1},\qquad\qquad
\tilde{B}=GBG^{-1}-D_t(G)\cdot G^{-1}
\ee
form an $a$-normal ZCR.
\textup{(}That is, the functions~\er{tatbgu} satisfy 
\er{d=0}, \er{aukak}, \er{bxx0}, \er{zcrtmn}.\textup{)}
Furthermore, $G$ depends only on $x$, $t$, $u_0,\dots,u_{\oc-1}$, 
and the ZCR~\er{tatbgu} is of order~$\le\ocs$.

Note that, according to our definition of gauge transformations, 
the function $G$ takes values in $\mathcal{G}$. 
The property $G(a)=\mathrm{Id}$ means that $G(x_a,t_a,a_0,\dots,a_{\oc-1})=\mathrm{Id}$.
\end{theorem}
\begin{proof}
Existence of the required gauge transformation follows from Theorem~\ref{evcov}.
Let us prove uniqueness of it.

Suppose that we have two gauge transformations 
$$
G_1=G_1(x,t,u_0,\dots,u_{l_1}),\qquad\qquad
G_2=G_2(x,t,u_0,\dots,u_{l_2})
$$ 
such that $G_1(a)=G_2(a)=\mathrm{Id}$ and for each $i=1,2$ the ZCR 
given by the functions
\beq
\lb{anabi}
\anA_i=G_iAG_i^{-1}-D_x(G_i)\cdot G_i^{-1},\qquad\qquad
\anB_i=G_iBG_i^{-1}-D_t(G_i)\cdot G_i^{-1}
\ee
is $a$-normal. 

Relations~\er{anabi} say the following.
For each $i=1,2$, 
applying the gauge transformation $G_i$ to the ZCR $A,B$,
we get the ZCR $\anA_i,\anB_i$.
Therefore, applying the gauge transformation $G_1G_2^{-1}$ 
to the ZCR $\anA_2,\anB_2$, we get the ZCR $\anA_1,\anB_1$.
That is, from~\er{anabi} one obtains 
\begin{gather*}
\notag
\anA_1=(G_1G_2^{-1})\anA_2(G_1G_2^{-1})^{-1}
-D_x(G_1G_2^{-1})\cdot(G_1G_2^{-1})^{-1},\\
\anB_1=(G_1G_2^{-1})\anB_2(G_1G_2^{-1})^{-1}
-D_t(G_1G_2^{-1})\cdot(G_1G_2^{-1})^{-1},
\end{gather*}
which means that the $a$-normal ZCR $\anA_1,\anB_1$ is 
gauge equivalent to the $a$-normal ZCR $\anA_2,\anB_2$ 
by means of the gauge transformation $G_1G_2^{-1}$.
Then, by Lemma~\ref{lange}, the $\mathcal{G}$-valued function $G_1G_2^{-1}$ 
is a constant element of the group $\mathcal{G}$.
Since $G_1(a)=G_2(a)=\mathrm{Id}$, this implies $G_1=G_2$.
\end{proof}

\begin{remark}
\lb{abcoef0}
According to Remark~\ref{anmer}, the $\mg$-valued functions~\er{thmnoc}
are analytic on a neighborhood of $a\in\CE$.
The construction of $G=G(x,t,u_0,\dots,u_{\ocs-1})$ 
in the proof of Theorem~\ref{evcov} implies that $G$ is analytic as well.
Then the $\mg$-valued functions $\tilde{A}$, $\tilde{B}$ given by~\er{mnprth} 
are also analytic on a neighborhood of $a\in\CE$.

Since $\tilde{A}$, $\tilde{B}$ are analytic and are of the form~\er{tatbth},
these functions are represented as absolutely convergent power series
\begin{gather}
\label{aser}
\tilde{A}=\sum_{l_1,l_2,i_0,\dots,i_\ocs\ge 0} 
(x-x_a)^{l_1} (t-t_a)^{l_2}(u_0-a_0)^{i_0}\dots(u_\ocs-a_\ocs)^{i_\ocs}\cdot
\tilde{A}^{l_1,l_2}_{i_0\dots i_\ocs},\\
\lb{bser}
\tilde{B}=\sum_{l_1,l_2,j_0,\dots,j_{\ocs+\eo-1}\ge 0} 
(x-x_a)^{l_1} (t-t_a)^{l_2}(u_0-a_0)^{j_0}\dots(u_{\ocs+\eo-1}-a_{\ocs+\eo-1})^{j_{\ocs+\eo-1}}\cdot
\tilde{B}^{l_1,l_2}_{j_0\dots j_{\ocs+\eo-1}},\\
\notag
\tilde{A}^{l_1,l_2}_{i_0\dots i_\ocs},\,\tilde{B}^{l_1,l_2}_{j_0\dots j_{\ocs+\eo-1}}\in\mg.
\end{gather}

For each $k\in\zsp$, we set 
\beq
\lb{defzcsoc}
\zcs_k=\Big\{(i_0,\dots,i_{k})\in\zp^{k+1}\ \Big|\ \exists\,r\in\{1,\dots,k\}\,\ 
\text{such that}\,\ i_r=1,\,\ i_q=0\,\ \forall\,q>r\Big\}.
\ee
In other words, for $k\in\zsp$ and $i_0,\dots,i_{k}\in\zp$, one has $(i_0,\dots,i_{k})\in\zcs_k$ iff
there is $r\in\{1,\dots,k\}$ such that 
$(i_0,\dots,i_{r-1},i_r,i_{r+1},\dots,i_{k})=(i_0,\dots,i_{r-1},1,0,\dots,0)$.
Set also $\zcs_0=\varnothing$.
So the set $\zcs_0$ is empty.

Using formulas~\er{aser},~\er{bser}, we see that properties~\er{d=0},~\er{aukak}, \er{bxx0}
are equivalent to 
\beq
\lb{ab000}
\tilde{A}^{l_1,l_2}_{0\dots 0}=\tilde{B}^{0,l_2}_{0\dots 0}=0,\qquad
\tilde{A}^{l_1,l_2}_{i_0\dots i_{\ocs}}=0,
\qquad (i_0,\dots,i_{\ocs})\in\zcs_\ocs,\qquad l_1,l_2\in\zp.
\ee
\end{remark}

\begin{remark}
\lb{psdxdtlie}
Let $\bl$ be a Lie algebra. 
Consider a formal power series of the form 
$$
C=\sum_{l_1,l_2,i_0,\dots,i_m\ge 0} 
(x-x_a)^{l_1} (t-t_a)^{l_2}(u_0-a_0)^{i_0}\dots(u_m-a_m)^{i_m}\cdot 
C^{l_1,l_2}_{i_0\dots i_m},\qquad\quad
C^{l_1,l_2}_{i_0\dots i_m}\in\bl.
$$
Set 
\begin{gather}
\lb{dxck}
D_x(C)=\sum_{l_1,l_2,i_0,\dots,i_m} 
D_x\big((x-x_a)^{l_1} (t-t_a)^{l_2}(u_0-a_0)^{i_0}\dots(u_m-a_m)^{i_m}\big)\cdot 
C^{l_1,l_2}_{i_0\dots i_m},\\
\lb{dtck}
D_t(C)=\sum_{l_1,l_2,i_0,\dots,i_m} 
D_t\big((x-x_a)^{l_1} (t-t_a)^{l_2}(u_0-a_0)^{i_0}\dots(u_m-a_m)^{i_m}\big)\cdot 
C^{l_1,l_2}_{i_0\dots i_m}.
\end{gather}
The expressions 
\beq
\lb{dxtua}
\begin{aligned}
D_x\big((x-x_a)^{l_1} (t-t_a)^{l_2}(u_0-a_0)^{i_0}\dots(u_m-a_m)^{i_m}\big),\\
D_t\big((x-x_a)^{l_1} (t-t_a)^{l_2}(u_0-a_0)^{i_0}\dots(u_m-a_m)^{i_m}\big)
\end{aligned}
\ee
are functions of the variables $x$, $t$, $u_k$. 
Taking the corresponding Taylor series at the point~\eqref{pointevs}, 
we regard~\er{dxtua} as power series. 
Then~\er{dxck},~\er{dtck} become formal power series with coefficients in $\bl$. 

According to~\er{evdxdt}, one has  
$D_t=\frac{\pd}{\pd t}+\sum_{k\ge 0} D_x^k(F)\frac{\pd}{\pd u_k}$, 
where $F=F(x,t,u_0,\dots,u_{\eo})$ is given in~\er{eveq_intr}. 
When we apply $D_t$ in~\er{dtck}, 
we view $F$ as a power series, using the Taylor series of the function~$F$.

Consider another formal power series
$$
R=\sum_{q_1,q_2,j_0,\dots,j_m\ge 0} 
(x-x_a)^{q_1} (t-t_a)^{q_2}(u_0-a_0)^{j_0}\dots(u_m-a_m)^{j_m}\cdot 
R^{q_1,q_2}_{j_0\dots j_m},\qquad\quad
R^{q_1,q_2}_{j_0\dots j_m}\in\bl.
$$
Then the Lie bracket $[C,R]$ is defined as follows
$$
[C,R]=\sum_{\substack{l_1,l_2,i_0,\dots,i_m,\\ q_1,q_2,j_0,\dots,j_m}}
(x-x_a)^{l_1+q_1} (t-t_a)^{l_2+q_2}(u_0-a_0)^{i_0+j_0}\dots(u_m-a_m)^{i_m+j_m}\cdot 
\big[C^{l_1,l_2}_{i_0\dots i_m},\,R^{q_1,q_2}_{j_0\dots j_m}\big].
$$

\end{remark}

\begin{remark}  
\label{inform}
The main idea of the definition of the Lie algebra $\fds^\ocs(\CE,a)$  
can be informally outlined as follows. 
According to Theorem~\ref{evcov} and Remark~\ref{abcoef0}, 
any ZCR~\er{thmnoc} of order $\le\ocs$ is gauge equivalent 
to a ZCR given by functions $\tilde{A}$, $\tilde{B}$ 
that are of the form~\er{aser},~\er{bser} 
and satisfy~\er{zcrtmn}, \er{ab000}.

To define $\fds^\ocs(\CE,a)$, we regard $\tilde{A}^{l_1,l_2}_{i_0\dots i_\ocs}$, 
$\tilde{B}^{l_1,l_2}_{j_0\dots j_{\ocs+\eo-1}}$ from~\er{aser},~\er{bser} 
as abstract symbols. 
By definition, the algebra $\fds^\ocs(\CE,a)$ is generated by the symbols 
$\tilde{A}^{l_1,l_2}_{i_0\dots i_\ocs}$, $\tilde{B}^{l_1,l_2}_{j_0\dots j_{\ocs+\eo-1}}$
for $l_1,l_2,i_0,\dots,i_\ocs,j_0,\dots,j_{\ocs+\eo-1}\in\zp$.
Relations for these generators are provided by equations~\er{zcrtmn}, \er{ab000}.
The details of this construction are presented below. 

\end{remark}

Let $\frl$ be the free Lie algebra generated 
by the symbols $\fla^{l_1,l_2}_{i_0\dots i_\ocs}$ and $\flb^{l_1,l_2}_{j_0\dots j_{\ocs+\eo-1}}$ 
for all 
$$
l_1,l_2,i_0,\dots,i_\ocs,j_0,\dots,j_{\ocs+\eo-1}\in\zp.
$$
Consider the following power series with coefficients in~$\frl$
\begin{gather*}
\fla=\sum_{l_1,l_2,i_0,\dots,i_\ocs\ge 0} 
(x-x_a)^{l_1} (t-t_a)^{l_2}(u_0-a_0)^{i_0}\dots(u_\ocs-a_\ocs)^{i_\ocs}\cdot
\fla^{l_1,l_2}_{i_0\dots i_\ocs},\\
\flb=\sum_{l_1,l_2,j_0,\dots,j_{\ocs+\eo-1}\ge 0} 
(x-x_a)^{l_1} (t-t_a)^{l_2}(u_0-a_0)^{j_0}\dots(u_{\ocs+\eo-1}-a_{\ocs+\eo-1})^{j_{\ocs+\eo-1}}\cdot
\flb^{l_1,l_2}_{j_0\dots j_{\ocs+\eo-1}}.
\end{gather*}

Then the power series $D_x(\flb)$, $D_t(\fla)$, $[\fla,\flb]$ 
are defined according to Remark~\ref{psdxdtlie}. We have 
\begin{equation*}
D_x(\flb)-D_t(\fla)+[\fla,\flb]=
\sum_{l_1,l_2,q_0,\dots,q_{\ocs+\eo}\ge 0}
(x-x_a)^{l_1}(t-t_a)^{l_2}(u_0-a_0)^{q_0}\dots(u_{\ocs+\eo}-a_{\ocs+\eo})^{q_{\ocs+\eo}}
\cdot\flz^{l_1,l_2}_{q_0\dots q_{\ocs+\eo}}
\end{equation*}
for some elements $\flz^{l_1,l_2}_{q_0\dots q_{\ocs+\eo}}\in\frl$. 

Let $\frid\subset\frl$ be the ideal generated by the elements
\begin{gather*}
\flz^{l_1,l_2}_{q_0\dots q_{\ocs+\eo}},\qquad\fla^{l_1,l_2}_{0\dots 0},\qquad 
\flb^{0,l_2}_{0\dots 0},\qquad l_1,l_2,q_0,\dots,q_{\ocs+\eo}\in\zp,\\
\fla^{l_1,l_2}_{i_0\dots i_{\ocs}},\qquad (i_0,\dots,i_{\ocs})\in\zcs_\ocs,\qquad l_1,l_2\in\zp.
\end{gather*}
Set $\fds^\ocs(\CE,a)=\frl/\frid$. 
Consider the natural homomorphism  
$\psi\cl\frl\to\frl/\frid=\fds^\ocs(\CE,a)$ and set 
$$
\ga^{l_1,l_2}_{i_0\dots i_\ocs}=\psi\big(\fla^{l_1,l_2}_{i_0\dots i_\ocs}\big),\qquad\qquad 
\gb^{l_1,l_2}_{j_0\dots j_{\ocs+\eo-1}}=\psi\big(\flb^{l_1,l_2}_{j_0\dots j_{\ocs+\eo-1}}\big).
$$
The definition of~$\frid$ implies that the power series 
\begin{gather}
\label{gasumxt}
\ga=\sum_{l_1,l_2,i_0,\dots,i_\ocs\ge 0} 
(x-x_a)^{l_1} (t-t_a)^{l_2}(u_0-a_0)^{i_0}\dots(u_\ocs-a_\ocs)^{i_\ocs}\cdot
\ga^{l_1,l_2}_{i_0\dots i_\ocs},\\
\label{gbsumxt}
\gb=\sum_{l_1,l_2,j_0,\dots,j_{\ocs+\eo-1}\ge 0} 
(x-x_a)^{l_1} (t-t_a)^{l_2}(u_0-a_0)^{j_0}\dots(u_{\ocs+\eo-1}-a_{\ocs+\eo-1})^{j_{\ocs+\eo-1}}\cdot
\gb^{l_1,l_2}_{j_0\dots j_{\ocs+\eo-1}}
\end{gather}
satisfy 
\beq
\lb{xgbtga}
D_x(\gb)-D_t(\ga)+[\ga,\gb]=0.
\ee

\begin{remark}
\lb{rem_fdpgen}
The Lie algebra $\fds^\ocs(\CE,a)$ can be described in terms 
of generators and relations as follows. 

Equation~\er{xgbtga} is equivalent to some Lie algebraic relations for 
$\ga^{l_1,l_2}_{i_0\dots i_\ocs}$, $\gb^{l_1,l_2}_{j_0\dots j_{\ocs+\eo-1}}$.
The algebra $\fds^\ocs(\CE,a)$ is given by the generators 
$\ga^{l_1,l_2}_{i_0\dots i_\ocs}$, $\gb^{l_1,l_2}_{j_0\dots j_{\ocs+\eo-1}}$,  
the relations arising from~\er{xgbtga}, and the following relations 
\beq
\lb{gagb00}
\ga^{l_1,l_2}_{0\dots 0}=\gb^{0,l_2}_{0\dots 0}=0,\qquad 
\ga^{l_1,l_2}_{i_0\dots i_{\ocs}}=0,
\qquad (i_0,\dots,i_{\ocs})\in\zcs_\ocs,\qquad l_1,l_2\in\zp.
\ee
Note that condition~\er{gagb00} is equivalent to the following equations
\begin{gather}
\lb{pdgau}
\frac{\pd\ga}{\pd u_s}\,\,\bigg|_{u_k=a_k,\ k\ge s}=0\qquad\quad
\forall\,s\ge 1,\\
\lb{gaua0}
\ga\,\Big|_{u_k=a_k,\ k\ge 0}=0,\\
\lb{gbxua0}
\gb\,\Big|_{x=x_a,\ u_k=a_k,\ k\ge 0}=0.
\end{gather}

Note that, according to Remark~\ref{psdxdtlie}, 
the definition of the power series $D_t(\ga)$ in~\er{xgbtga}  
uses the Taylor series of the function $F=F(x,t,u_0,\dots,u_{\eo})$ from~\er{eveq_intr}, 
because $D_t$ is determined by $F$.
So the constructed generators and relations for the algebra $\fds^\ocs(\CE,a)$ 
are determined by the Taylor series of the function~$F$ at the point~\eqref{pointevs}.

\end{remark}

\begin{remark}
\lb{rfzcr}
Let $\bl$ be a Lie algebra. 
If $A$, $B$ are functions with values in $\bl$ and satisfy~\er{thmnoc} 
then $A$, $B$ constitute a ZCR of order~$\le\oc$ with values in $\bl$.

Instead of functions with values in $\bl$, 
one can consider formal power series with coefficients in~$\bl$.
Then one gets the notion of \emph{formal ZCRs with coefficients in~$\bl$}.

More precisely, a \emph{formal ZCR of order~$\le\oc$ with coefficients in $\bl$}
is given by formal power series 
\begin{gather}
\label{anasum}
\anA=\sum_{l_1,l_2,i_0,\dots,i_\ocs\ge 0} 
(x-x_a)^{l_1} (t-t_a)^{l_2}(u_0-a_0)^{i_0}\dots(u_\ocs-a_\ocs)^{i_\ocs}\cdot
\anA^{l_1,l_2}_{i_0\dots i_\ocs},\\
\label{anbsum}
\anB=
\sum_{l_1,l_2,j_0,\dots,j_{\ocs+\eo-1}\ge 0} 
(x-x_a)^{l_1} (t-t_a)^{l_2}(u_0-a_0)^{j_0}\dots(u_{\ocs+\eo-1}-a_{\ocs+\eo-1})^{j_{\ocs+\eo-1}}\cdot
\anB^{l_1,l_2}_{j_0\dots j_{\ocs+\eo-1}}
\end{gather}
such that 
$\anA^{l_1,l_2}_{i_0\dots i_\ocs},\anB^{l_1,l_2}_{j_0\dots j_{\ocs+\eo-1}}\in\bl$ and
\beq
\lb{dxanb0}
D_x(\anB)-D_t(\anA)+[\anA,\anB]=0.
\ee
If the power series~\er{anasum},~\er{anbsum} 
satisfy \er{agd=0}, \er{agaukak}, \er{agbxx0} 
then this formal ZCR is said to be \emph{$a$-normal}.

For example, since \er{gasumxt}, \er{gbsumxt} 
obey \er{xgbtga}, \er{pdgau}, \er{gaua0}, \er{gbxua0} and 
$\ga^{l_1,l_2}_{i_0\dots i_\ocs},\gb^{l_1,l_2}_{j_0\dots j_{\ocs+\eo-1}}\in\fd^\oc(\CE,a)$, 
the power series \er{gasumxt}, \er{gbsumxt} constitute an $a$-normal 
formal ZCR of order~$\le\oc$ with coefficients in $\fd^\oc(\CE,a)$.
\end{remark}

\begin{remark}
\lb{rczcrh}
Let $\mg$ be a finite-dimensional Lie algebra.
Let $\hrf\cl\fds^\ocs(\CE,a)\to\mg$ be a homomorphism 
from $\fds^\ocs(\CE,a)$ to $\mg$.
Applying $\hrf$ to the coefficients of the power series \er{gasumxt}, \er{gbsumxt}, 
we get the following power series with coefficients in~$\mg$
\begin{gather}
\label{hraser}
\anA=\sum_{l_1,l_2,i_0,\dots,i_\ocs} 
(x-x_a)^{l_1} (t-t_a)^{l_2}(u_0-a_0)^{i_0}\dots(u_\ocs-a_\ocs)^{i_\ocs}\cdot
\hrf\big(\ga^{l_1,l_2}_{i_0\dots i_\ocs}\big),\\
\lb{hrbser}
\anB=\sum_{l_1,l_2,j_0,\dots,j_{\ocs+\eo-1}} 
(x-x_a)^{l_1} (t-t_a)^{l_2}(u_0-a_0)^{j_0}\dots(u_{\ocs+\eo-1}-a_{\ocs+\eo-1})^{j_{\ocs+\eo-1}}\cdot
\hrf\big(\gb^{l_1,l_2}_{j_0\dots j_{\ocs+\eo-1}}\big).
\end{gather} 
Since \er{gasumxt}, \er{gbsumxt} 
obey \er{xgbtga}, \er{pdgau}, \er{gaua0}, \er{gbxua0}, 
the power series~\er{hraser},~\er{hrbser} 
satisfy \er{agd=0}, \er{agaukak}, \er{agbxx0}, \er{dxanb0}.
Therefore, \er{hraser},~\er{hrbser}
constitute an $a$-normal formal ZCR of order~$\le\oc$ 
with coefficients in~$\mg$.

A homomorphism $\hrf\cl\fds^\ocs(\CE,a)\to\mg$ 
is said to be \emph{regular} if the power series~\er{hraser},~\er{hrbser}
are absolutely convergent in a neighborhood of $a\in\CE$. 
In other words, $\hrf$ is regular iff \er{hraser},~\er{hrbser} 
are analytic functions with values in $\mg$ on a neighborhood of $a\in\CE$.

For a regular homomorphism $\hrf$, 
the analytic functions~\er{hraser},~\er{hrbser} 
form an $a$-normal $\mg$-valued ZCR of order~$\le\oc$.
We denote this ZCR by $\mathbf{Z}(\CE,a,\ocs,\hrf)$.
Formulas~\er{hraser},~\er{hrbser} imply that 
the ZCR $\mathbf{Z}(\CE,a,\ocs,\hrf)$ takes values in 
the Lie subalgebra $\hrf\big(\fds^\ocs(\CE,a)\big)\subset\mg$.
\end{remark}

\begin{remark}
\lb{rcrhzcr}
Let $\mg$ be a finite-dimensional matrix Lie algebra.
By Theorems~\ref{evcov},~\ref{tguniq}, 
for any $\mg$-valued ZCR~\er{thmnoc} of order~$\le\ocs$ 
on a neighborhood of $a\in\CE$, 
there is a unique gauge transformation $G$ 
such that $G(a)=\mathrm{Id}$ and the functions~\er{mnprth} 
form an $a$-normal ZCR.
(That is, the functions~\er{mnprth} satisfy 
\er{d=0}, \er{aukak}, \er{bxx0}, \er{zcrtmn}.)

Consider the Taylor series~\er{aser},~\er{bser} 
of the functions~\er{mnprth}.
Properties~\er{zcrtmn},~\er{ab000} imply that the following homomorphism 
\beq
\lb{hrfdef}
\hrf\cl\fds^\ocs(\CE,a)\to\mg,\qquad
\hrf\big(\ga^{l_1,l_2}_{i_0\dots i_\ocs}\big)=
\tilde{A}^{l_1,l_2}_{i_0\dots i_\ocs},\qquad
\hrf\big(\gb^{l_1,l_2}_{j_0\dots j_{\ocs+\eo-1}}\big)=
\tilde{B}^{l_1,l_2}_{j_0\dots j_{\ocs+\eo-1}},
\ee
is well defined.
Here 
$\tilde{A}^{l_1,l_2}_{i_0\dots i_\ocs},
\tilde{B}^{l_1,l_2}_{j_0\dots j_{\ocs+\eo-1}}\in\mg$ 
are the coefficients of the power series~\er{aser},~\er{bser}.

Since \er{aser},~\er{bser} are the Taylor series of the 
analytic functions~\er{mnprth}, the homomorphism~\er{hrfdef} is regular.
According to Remark~\ref{rczcrh},
we get also the $\mg$-valued ZCR $\mathbf{Z}(\CE,a,\ocs,\hrf)$ 
corresponding to the regular homomorphism~\er{hrfdef}.
The ZCR $\mathbf{Z}(\CE,a,\ocs,\hrf)$ 
coincides with the ZCR given by the functions~\er{mnprth}.

Since the ZCR~\er{thmnoc} is gauge equivalent to the ZCR
given by~\er{mnprth},
we see that the ZCR~\er{thmnoc} is gauge equivalent 
to the ZCR $\mathbf{Z}(\CE,a,\ocs,\hrf)$.

\end{remark}

\begin{theorem}
\lb{thzcrfd}
Let $\mg\subset\gl_\sm$ be a matrix Lie algebra and  
$\mathcal{G}\subset\mathrm{GL}_\sm$ be 
the connected matrix Lie group corresponding to~$\mg$,
where $\sm\in\zsp$. 
In what follows, all ZCRs are defined on a neighborhood of $a\in\CE$.
Let $\oc\in\zp$. Consider $\mg$-valued ZCRs of order~$\le\ocs$
\beq
\lb{tcaboc}
A=A(x,t,u_0,\dots,u_\ocs),\quad 
B=B(x,t,u_0,\dots,u_{\ocs+\eo-1}),\quad
D_x(B)-D_t(A)+[A,B]=0.
\ee
We have the following correspondence between $\mg$-valued ZCRs~\er{tcaboc}
and homomorphisms $\hrf\cl\fds^\ocs(\CE,a)\to\mg$.
\begin{itemize}
\item In Remark~\ref{rcrhzcr}, for any $\mg$-valued ZCR~\er{tcaboc}, 
we have canonically defined a regular homomorphism $\hrf\cl\fds^\ocs(\CE,a)\to\mg$, 
so that the ZCR~\er{tcaboc} is gauge equivalent 
to the ZCR $\mathbf{Z}(\CE,a,\ocs,\hrf)$ defined in Remark~\ref{rczcrh}.
The ZCR $\mathbf{Z}(\CE,a,\ocs,\hrf)$ is $a$-normal 
and takes values in the Lie subalgebra $\hrf\big(\fds^\ocs(\CE,a)\big)\subset\mg$.

\item In Remark~\ref{rczcrh}, for any homomorphism 
${\hrf}\cl\fds^\ocs(\CE,a)\to\mg$, 
we have canonically defined a formal ZCR of order~$\le\oc$ 
with coefficients in~$\mg$.
This formal ZCR is given by the formal 
power series~\er{hraser},~\er{hrbser} and is $a$-normal.
If the homomorphism ${\hrf}$ is regular, then this formal ZCR is analytic 
and coincides with the ZCR $\mathbf{Z}(\CE,a,\ocs,{\hrf})$.

\item
For each $i=1,2$, let 
\beq
\lb{aibit}
A_i=A_i(x,t,u_0,\dots,u_{\oc}),\quad 
B_i=B_i(x,t,u_0,\dots,u_{\oc+\eo-1}),\quad
D_x(B_i)-D_t(A_i)+[A_i,B_i]=0
\ee
be a $\mg$-valued ZCR of order~$\le\oc$.
Let $\hrf_i\cl\fds^\ocs(\CE,a)\to\mg$ be the regular homomorphism 
associated with the ZCR~\er{aibit} by the construction in Remark~\ref{rcrhzcr}.

Then we have the following property.
The ZCR $A_1,B_1$ is gauge equivalent to the ZCR $A_2,B_2$ 
iff there is an element $\bG\in\mathcal{G}$ such that
\beq
\lb{hrfvbg}
\hrf_1(v)=\bG\cdot\hrf_2(v)\cdot\bG^{-1}
\quad\qquad\forall\,v\in\fds^{\ocs}(\CE,a).
\ee
\end{itemize}
\end{theorem}
\begin{proof}
We need to prove only the last statement of the theorem, 
because the other statements follow from Remarks~\ref{rczcrh},~\ref{rcrhzcr}.

According to Remark~\ref{rczcrh},
for each $i=1,2$ the ZCR $A_i,B_i$ is gauge equivalent to the 
$a$-normal ZCR $\mathbf{Z}(\CE,a,\ocs,\hrf_i)$.
Therefore, the ZCR $A_1,B_1$ is gauge equivalent to the ZCR $A_2,B_2$ 
iff $\mathbf{Z}(\CE,a,\ocs,\hrf_1)$ is gauge equivalent to
$\mathbf{Z}(\CE,a,\ocs,\hrf_2)$.

If there is an element $\bG\in\mathcal{G}$ satisfying~\er{hrfvbg}, 
then $\mathbf{Z}(\CE,a,\ocs,\hrf_1)$ is gauge equivalent to
$\mathbf{Z}(\CE,a,\ocs,\hrf_2)$ by means of the constant gauge transformation 
equal to $\bG$.

Conversely, if $\mathbf{Z}(\CE,a,\ocs,\hrf_1)$ 
is gauge equivalent to $\mathbf{Z}(\CE,a,\ocs,\hrf_2)$ 
by means of some gauge transformation, 
then existence of an element $\bG\in\mathcal{G}$ satisfying~\er{hrfvbg} 
follows from Lemma~\ref{lange},
because the ZCRs $\mathbf{Z}(\CE,a,\ocs,\hrf_1)$ and 
$\mathbf{Z}(\CE,a,\ocs,\hrf_2)$ are $a$-normal.
Indeed, by Lemma~\ref{lange}, if $\mathbf{Z}(\CE,a,\ocs,\hrf_1)$ 
is gauge equivalent to $\mathbf{Z}(\CE,a,\ocs,\hrf_2)$
by means of some gauge transformation,  
then this gauge transformation is actually 
a constant element $\bG\in\mathcal{G}$ obeying
\beq
\lb{zzbg}
\mathbf{Z}(\CE,a,\ocs,\hrf_1)=
\bG\cdot\mathbf{Z}(\CE,a,\ocs,\hrf_2)\cdot\bG^{-1}.
\ee
The definition of $\mathbf{Z}(\CE,a,\ocs,\hrf)$ 
in Remark~\ref{rczcrh} implies that \er{zzbg} is equivalent to~\er{hrfvbg}.
\end{proof}

\begin{remark}
\lb{rzcrcr}
Since we assume $\mg\subset\gl_\sm$ for some $\sm\in\zsp$, 
homomorphisms $\hrf\cl\fds^{\ocs}(\CE,a)\to\mg$ are representations 
of the Lie algebra $\fds^{\ocs}(\CE,a)$.
So from Theorem~\ref{thzcrfd} we see that 
$\mg$-valued ZCRs of order~$\le\ocs$ are classified 
by $\mg$-valued representations of $\fds^{\ocs}(\CE,a)$.
\end{remark}

Suppose that $\ocs\ge 1$. 
According to Remark~\ref{rem_fdpgen}, 
the algebra $\fds^\ocs(\CE,a)$ is given by the generators 
$\ga^{l_1,l_2}_{i_0\dots i_\ocs}$, $\gb^{l_1,l_2}_{j_0\dots j_{\ocs+\eo-1}}$
and the relations arising from~\er{xgbtga},~\er{gagb00}. 
Similarly, the algebra $\fds^{\ocs-1}(\CE,a)$ is given by the generators 
$\hat\ga^{l_1,l_2}_{i_0\dots i_{\ocs-1}}$, $\hat\gb^{l_1,l_2}_{j_0\dots j_{\ocs+\eo-2}}$
and the relations arising from 
\begin{gather*}
D_x\big(\hat\gb\big)-D_t\big(\hat\ga\big)+\big[\hat\ga,\hat\gb\big]=0,\\
\hat\ga^{l_1,l_2}_{0\dots 0}=\hat\gb^{0,l_2}_{0\dots 0}=0,\qquad 
\hat\ga^{l_1,l_2}_{i_0\dots i_{\ocs-1}}=0,
\qquad (i_0,\dots,i_{\ocs-1})\in\zcs_{\ocs-1},\qquad l_1,l_2\in\zp,
\end{gather*}
where 
\begin{gather*}
\hat\ga=\sum_{l_1,l_2,i_0,\dots,i_{\ocs-1}} 
(x-x_a)^{l_1} (t-t_a)^{l_2}(u_0-a_0)^{i_0}\dots(u_{\ocs-1}-a_{\ocs-1})^{i_{\ocs-1}}\cdot
\hat\ga^{l_1,l_2}_{i_0\dots i_{\ocs-1}},\\
\hat\gb=\sum_{l_1,l_2,j_0,\dots,j_{\ocs+\eo-2}} 
(x-x_a)^{l_1} (t-t_a)^{l_2}(u_0-a_0)^{j_0}\dots(u_{\ocs+\eo-2}-a_{\ocs+\eo-2})^{j_{\ocs+\eo-2}}
\cdot\hat\gb^{l_1,l_2}_{j_0\dots j_{\ocs+\eo-2}}.
\end{gather*}

This implies that the map 
\beq
\lb{fdhff}
\ga^{l_1,l_2}_{i_0\dots i_{\ocs-1}i_{\ocs}}\,\mapsto\,
\delta_{0,i_{\ocs}}\cdot\hat\ga^{l_1,l_2}_{i_0\dots i_{\ocs-1}},\qquad\quad 
\gb^{l_1,l_2}_{j_0\dots j_{\ocs+\eo-2}j_{\ocs+\eo-1}}\,\mapsto\,
\delta_{0,j_{\ocs+\eo-1}}\cdot\hat\gb^{l_1,l_2}_{j_0\dots j_{\ocs+\eo-2}}  
\ee
determines a surjective homomorphism $\fds^{\ocs}(\CE,a)\to\fds^{\ocs-1}(\CE,a)$.
Here $\delta_{0,i_{\ocs}}$ and $\delta_{0,j_{\ocs+\eo-1}}$ are the Kronecker deltas.

According to Theorem~\ref{thzcrfd}, the algebra $\fds^{\ocs}(\CE,a)$ 
is responsible for ZCRs of order $\le\ocs$, 
and the algebra $\fds^{\ocs-1}(\CE,a)$ is responsible for ZCRs of order $\le\ocs-1$.    
The constructed homomorphism $\fds^{\ocs}(\CE,a)\to\fds^{\ocs-1}(\CE,a)$ 
reflects the fact that any ZCR of order $\le\ocs-1$ is at the same time of order~$\le\ocs$.
Thus we obtain the following sequence of surjective homomorphisms of Lie algebras
\beq
\lb{fdnn-1}
\dots\to\fds^{\ocs}(\CE,a)\to\fds^{\ocs-1}(\CE,a)\to\dots\to\fds^1(\CE,a)\to\fds^0(\CE,a).
\ee

Other approaches to the study of 
the action of gauge transformations on ZCRs can be found 
in~\cite{marvan93,marvan97,marvan2010,sakov95,sakov2004,sebest2008} 
and references therein.
For a given ZCR with values in a matrix Lie algebra $\mg$, 
the papers~\cite{marvan93,marvan97,sakov95} define 
certain $\mg$-valued functions, which transform by conjugation 
when the ZCR transforms by gauge. 
Applications of these functions to construction and classification of 
some types of ZCRs are described  
in~\cite{marvan93,marvan97,marvan2010,sakov95,sakov2004,sebest2008}.

To our knowledge, 
the theory of~\cite{marvan93,marvan97,marvan2010,sakov95,sakov2004,sebest2008} 
does not produce any infinite-dimensional Lie algebras responsible for ZCRs. 
So this theory does not contain the algebras $\fds^\oc(\CE,a)$.

\section{Generators of the algebras $\fds^{\ocs}(\CE,a)$}
\lb{secrgen}

We continue to study the Lie algebras $\fds^\ocs(\CE,a)$, $\ocs\in\zp$, 
defined in Section~\ref{csev}.
Here $\CE$ is the infinite prolongation of equation~\er{eveq_intr}, 
and $a\in\CE$ is given by~\er{pointevs}.
According to Remark~\ref{rem_fdpgen}, 
the algebra $\fds^\ocs(\CE,a)$ is given by the generators 
\beq
\lb{gagblij}
\ga^{l_1,l_2}_{i_0\dots i_\ocs},\qquad 
\gb^{l_1,l_2}_{j_0\dots j_{\ocs+\eo-1}},\qquad
l_1,l_2,i_0,\dots,i_\ocs,j_0,\dots,j_{\ocs+\eo-1}\in\zp,
\ee
and the relations arising from~\er{xgbtga},~\er{gagb00}.
Using~\er{evdxdt}, we can rewrite equation~\er{xgbtga} as
\beq
\lb{zcrdet}
\frac{\pd}{\pd x}(\gb)+\sum_{k=0}^{\ocs+\eo-1} u_{k+1}\frac{\pd}{\pd u_k}(\gb)
-\frac{\pd}{\pd t}(\ga)-\sum_{k=0}^\ocs
D_x^k\big(F(x,t,u_0,\dots,u_{\eo})\big)\frac{\pd}{\pd u_k}(\ga)+[\ga,\gb]=0,
\ee
where $F(x,t,u_0,\dots,u_{\eo})$ is the right-hand side of equation~\er{eveq_intr}.
We regard $F=F(x,t,u_0,\dots,u_{\eo})$ as a power series, 
using the Taylor series of the function~$F$ at the point~\er{pointevs}.

According to Remark~\ref{rem_fdpgen}, 
the algebra $\fds^\ocs(\CE,a)$ is generated by the elements~\er{gagblij}.
Theorem~\ref{lemgenfdq} says that 
the elements~\er{gal1alprop} generate the algebra $\fds^\ocs(\ce,a)$ as well.
This fact is very useful in computations of $\fds^\ocs(\ce,a)$ 
for concrete equations, because the set of the elements~\er{gal1alprop} 
is much smaller than that of~\er{gagblij}.
Theorem~\ref{lemgenfdq} is used in Section~\ref{sec_we_alg} of this paper
and in the proof of Theorem~\ref{kntkdvtypeth} given in~\cite{sint18}.

\begin{theorem}
\lb{lemgenfdq}
The elements 
\beq
\lb{gal1alprop}
\ga^{l_1,0}_{i_0\dots i_\ocs},\qquad\qquad l_1,i_0,\dots,i_\ocs\in\zp,
\ee
generate the algebra $\fds^\ocs(\ce,a)$.
\end{theorem}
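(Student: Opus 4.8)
The plan is to prove that the Lie subalgebra $\mathfrak{h}\subset\fds^\ocs(\CE,a)$ generated by the elements \er{gal1alprop} contains every generator $\ga^{l_1,l_2}_{i_0\dots i_\ocs}$ and $\gb^{l_1,l_2}_{j_0\dots j_{\ocs+\eo-1}}$ of $\fds^\ocs(\CE,a)$; since these exhaust all generators, this gives $\mathfrak{h}=\fds^\ocs(\CE,a)$. The only inputs are the defining relations: equation \er{zcrdet} (equivalently \er{xgbtga}), read off coefficient by coefficient in the monomials $(x-x_0)^{l_1}(t-t_0)^{l_2}\prod_k(u_k-a_k)^{q_k}$, together with the vanishing relations \er{gagb00}. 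I would run an induction on the $t$-degree $l_2$, showing at each stage that all $\ga^{\cdot,l_2}$ and all $\gb^{\cdot,l_2}$ lie in $\mathfrak{h}$, granting the same for every strictly smaller $t$-degree.

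First I would produce $\ga^{l_1,l_2}_{i_0\dots i_\ocs}$ for $l_2\ge 1$. The term $-\pd_t(\ga)$ in \er{zcrdet} contributes $-l_2\,\ga^{l_1,l_2}_{i_0\dots i_\ocs}$ to the coefficient at $t$-degree $l_2-1$ and $u$-multi-index $(i_0,\dots,i_\ocs,0,\dots,0)$, while every remaining term of that coefficient involves only $\gb$'s of $t$-degree $l_2-1$, $\ga$'s of $t$-degree $\le l_2-1$ (entering through $D_x^k(F)\pd_{u_k}\ga$ paired with Taylor coefficients of $F$), and brackets of such elements. Since $\fik$ has characteristic zero, the inductive hypothesis then forces $\ga^{l_1,l_2}_{i_0\dots i_\ocs}\in\mathfrak{h}$. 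Next I would eliminate $\gb$ in favour of $\ga$ at the current $t$-degree. For coefficients whose $u$-multi-index $q$ has $q_r\ge 1$ for some $r>\ocs$, the $-\pd_t(\ga)$ contribution drops out because $\ga$ is independent of $u_{>\ocs}$; the leading term $u_r\pd_{u_{r-1}}(\gb)$ then lets me solve, by descending induction on the top index of the $u$-support (the standard prolongation, i.e.\ ``integration in $u_m$'', coming from $D_x(\gb)$ and $D_x^k(F)\pd_{u_k}\ga$), for every $\gb^{\cdot,l_2}$ whose $u$-support reaches an index $\ge\ocs$, in terms of $\ga$'s of $t$-degree $\le l_2$ (already in $\mathfrak{h}$) and lower-ranked $\gb$'s.

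The delicate case is the remaining ``low'' coefficients $\gb^{l_1,l_2}_{j_0\dots j_{\ocs+\eo-1}}$ supported only in $u_0,\dots,u_{\ocs-1}$: these appear solely in relations with $q_r=0$ for all $r>\ocs$, where $\pd_t(\ga)$ is present, so each such relation naively couples a low $\gb$ of degree $l_2$ with a $\ga$ of degree $l_2+1$. The resolution is to read the relation at the multi-index with a single $1$ in a position $s\in\{1,\dots,\ocs\}$, namely $(q_0,\dots,q_{s-1},1,0,\dots,0)$: the $\pd_t(\ga)$ contribution is then $\ga^{l_1,l_2+1}_{(q_0,\dots,q_{s-1},1,0,\dots,0)}$, and this multi-index lies in $\zcs_\ocs$ by \er{defzcsoc} (take $r=s$), so the contribution vanishes by the relations \er{gagb00}. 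The surviving leading term $(u_s-a_s)\pd_{u_{s-1}}(\gb)$ then expresses the low coefficient whose support-top is raised at $u_{s-1}$ purely through $\ga$'s, high $\gb$'s from the previous step, and low $\gb$'s of strictly larger support-top. Peeling the support downward by running $s$ from $\ocs$ to $1$, and using $\ga^{l_1,l_2}_{0\dots 0}=\gb^{0,l_2}_{0\dots 0}=0$ from \er{gagb00} for the $u$-constant part, places all low $\gb^{\cdot,l_2}$ in $\mathfrak{h}$ and completes the inductive stage.

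The main obstacle is precisely this last step: the coupling of the low $\gb$-coefficients with higher $t$-degree $\ga$-coefficients. The observation that makes the whole induction close is that the index sets $\zcs_\ocs$ annihilated by \er{gagb00} are exactly the patterns $(\dots,1,0,\dots,0)$ needed to switch off the $\pd_t(\ga)$ term while retaining a nonzero shift term $(u_s-a_s)\pd_{u_{s-1}}(\gb)$ to solve for $\gb$. Beyond this, the remaining work is bookkeeping: fixing within each $t$-degree the secondary well-ordering (descending top $u$-index in the high case, then descending support-top with controlled $x$-degree in the low case) so that the bracket terms $[\ga,\gb]$ and the shift terms $a_{k+1}\pd_{u_k}(\gb)$ always involve generators that are strictly lower in the ordering or already known to lie in $\mathfrak{h}$. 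I expect this ordering verification, rather than any single algebraic identity, to be the most laborious part of a complete proof.
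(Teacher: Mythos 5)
Your proposal is correct and follows essentially the same route as the paper's proof: the paper likewise solves for the $\gb$-coefficients by a descending induction on the top index of the $u$-support (its operator $\ds$ is exactly your ``read the relation at the multi-index with a single $1$ in position $s$''), uses the $\zcs_\ocs$-relations to kill the $D_t(\ga)$ contribution at precisely those multi-indices, and recovers $\ga^{l_1,l_2+1}$ from the coefficient of $(t-t_0)^{l_2}$ in $D_t(\ga)$. The only difference is organizational: the paper packages the argument as a filtration $\agn_0\supset\agn_1\supset\cdots$ by $t$-degree, proved in three separate lemmas, rather than as a single outer induction on $l_2$.
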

\begin{proof}
For each $l\in\zp$, 
denote by $\agn_l\subset \fds^{\ocs}(\CE,a)$ the subalgebra generated by 
all the elements $\ga^{l_1,l_2}_{i_0\dots i_\ocs}$ with $l_2\le l$.
To prove Theorem~\ref{lemgenfdq}, we need several lemmas. 

\begin{lemma}
\label{gbllalgl}
Let $l_1,l_2,j_0,\dots,j_{\ocs+\eo-1}\in\zp$ be such that $j_0+\dots+j_{\ocs+\eo-1}>0$. 
Then $\gb^{l_1,l_2}_{j_0\dots j_{\ocs+\eo-1}}\in\agn_{l_2}$.
\end{lemma}
\begin{proof}
For any $j_0,\dots,j_{\ocs+\eo-1}\in\zp$ satisfying $j_0+\dots+j_{\ocs+\eo-1}>0$, 
denote by $\Phi(j_0,\dots,j_{\ocs+\eo-1})$ the maximal integer $r\in\{0,1,\dots,\ocs+\eo-1\}$ 
such that $j_r\neq 0$. 
Set also $\Phi(0,\dots,0)=-1$.

Differentiating~\eqref{zcrdet} with respect to $u_{\ocs+\eo}$, we obtain 
$
\frac{\pd}{\pd u_{\ocs+\eo-1}}(\gb)=
\frac{\pd F}{\pd u_{\eo}}\cdot\frac{\pd}{\pd u_\ocs}(\ga),
$
which implies $\gb^{l_1,l_2}_{j_0\dots j_{\ocs+\eo-1}}\in\agn_{l_2}$ 
for all $l_1$, $l_2$, $j_0,\dots,j_{\ocs+\eo-1}$ obeying $\Phi(j_0,\dots,j_{\ocs+\eo-1})=\ocs+\eo-1$.

Let $\ic\in\{0,1,\dots,\ocs+\eo-1\}$ be such that  
\beq
\lb{assumgbn}
\gb^{l_1,l_2}_{j'_0\dots j'_{\ocs+\eo-1}}\in\agn_{l_2}\quad
\text{for all $l_1,l_2,j'_0,\dots,j'_{\ocs+\eo-1}\in\zp$ satisfying   $\Phi(j'_0,\dots,j'_{\ocs+\eo-1})>\ic$}.
\ee  
We are going to show that 
$$
\gb^{l_1,l_2}_{\tilde\jmath_0\dots\tilde\jmath_{\ocs+\eo-1}}\in\agn_{l_2}
$$ 
for all $l_1,l_2,\tilde\jmath_0,\dots,\tilde\jmath_{\ocs+\eo-1}\in\zp$ 
satisfying 
$\Phi(\tilde\jmath_0,\dots,\tilde\jmath_{\ocs+\eo-1})=\ic$.  

For any power series $C$ of the form 
$$
C=\sum_{l_1,l_2,d_0,\dots,d_k\ge 0} 
(x-x_a)^{l_1} (t-t_a)^{l_2}(u_0-a_0)^{d_0}\dots(u_k-a_k)^{d_k}\cdot
C^{l_1,l_2}_{d_0\dots d_k},\qquad\quad C^{l_1,l_2}_{d_0\dots d_k}\in\fds^\ocs(\ce,a),
$$
set 
$$
\ds(C)=\left.\Big(\frac{\pd}{\pd u_{\ic+1}}(C)\Big)\,\right|_{u_k=a_k,\ k\ge \ic+1}.
$$
That is, in order to obtain $\ds(C)$, we differentiate $C$ with respect to $u_{\ic+1}$ 
and then substitute $u_k=a_k$ for all $k\ge \ic+1$. 
Property~\er{gagb00} implies
\beq
\lb{dsga0}
\ds\Big(\frac{\pd}{\pd t}(\ga)\Big)=0.
\ee
Combining~\er{zcrdet} with~\er{dsga0}, we get 
\beq
\lb{dsdxgbsum}
\ds\big(D_x(\gb)\big)=
\ds\bigg(\sum_{k=0}^\ocs D_x^k(F)\frac{\pd}{\pd u_k}(\ga)\bigg)
-\ds\big([\ga,\gb]\big).
\ee

Using~\er{gbsumxt}, one obtains
\begin{multline}
\label{dsdxgb}
\ds\big(D_x(\gb)\big)=
\sum_{\substack{l_1,l_2,j_0,\dots,j_{\ocs+\eo-1}\ge 0,\\
\Phi(j_0,\dots,j_{\ocs+\eo-1})=\ic}}
j_{\ic}(x-x_a)^{l_1}(t-t_a)^{l_2}
(u_0-a_0)^{j_0}\dots(u_{\ic}-a_{\ic})^{j_{\ic}-1}
\gb^{l_1,l_2}_{j_0\dots j_{\ocs+\eo-1}}+\\
+\ds\Bigg(\sum_{\substack{l_1,l_2,j_0,\dots,j_{\ocs+\eo-1}\ge 0,\\
\Phi(j_0,\dots,j_{\ocs+\eo-1})>\ic}}
(t-t_a)^{l_2} 
D_x\Big((x-x_a)^{l_1}(u_0-a_0)^{j_0}\dots(u_{\ocs+\eo-1}-a_{\ocs+\eo-1})^{j_{\ocs+\eo-1}}\Big)
\cdot
\gb^{l_1,l_2}_{j_0\dots j_{\ocs+\eo-1}}\Bigg).
\end{multline}
From~\er{gagb00} it follows that $\ds(\ga)=0$, which yields 
\beq
\label{dsgagb}
\ds\big([\ga,\gb]\big)=\Big[\ds(\ga),\,\gb\,\Big|_{u_k=a_k,\ k\ge \ic+1}\Big]
+\Big[\ga\,\Big|_{u_k=a_k,\ k\ge \ic+1},\,\ds(\gb)\Big]=\Big[\ga\,\Big|_{u_k=a_k,\ k\ge \ic+1},\,\ds(\gb)\Big].
\ee
In view of~\er{dsdxgb},~\er{dsgagb},
for any $l_1,l_2,\tilde\jmath_0,\dots,\tilde\jmath_{\ocs+\eo-1}\in\zp$ 
satisfying  
$\Phi(\tilde\jmath_0,\dots,\tilde\jmath_{\ocs+\eo-1})=\ic$ 
the element $\gb^{l_1,l_2}_{\tilde\jmath_0\dots\tilde\jmath_{\ocs+\eo-1}}$ 
appears only once on the left-hand side of~\er{dsdxgbsum} 
and does not appear on the right-hand side of~\er{dsdxgbsum}. 
Combining~\er{dsdxgbsum},~\er{dsdxgb},~\er{dsgagb},
we see that the element $\gb^{l_1,l_2}_{\tilde\jmath_0\dots\tilde\jmath_{\ocs+\eo-1}}$ 
is equal to a linear combination of elements of the form 
\beq
\label{elemgagb}
\ga^{l_1',l_2'}_{i_0\dots i_\ocs},\quad
\gb^{\hat{l}_1,\hat{l}_2}_{\hat{\jmath}_0\dots\hat{\jmath}_{\ocs+\eo-1}},\quad
\Big[\ga^{l_1',l_2'}_{i_0\dots i_\ocs},
\gb^{\hat{l}_1,\hat{l}_2}_{\hat{\jmath}_0\dots\hat{\jmath}_{\ocs+\eo-1}}\Big],\quad
l_2'\le l_2,\quad \hat{l}_2\le l_2,\quad 
\Phi(\hat{\jmath}_0,\dots,\hat{\jmath}_{\ocs+\eo-1})>\ic.
\ee
Obviously, for any $\hat{l}_2\le l_2$ one has $\agn_{{\hat{l}}_2}\subset\agn_{l_2}$.
Taking into account assumption~\er{assumgbn}, 
we obtain that the elements~\er{elemgagb} belong to $\agn_{{l}_2}$. 
Hence $\gb^{l_1,l_2}_{\tilde\jmath_0\dots\tilde\jmath_{\ocs+\eo-1}}\in\agn_{{l}_2}$.

The proof of the lemma is completed by induction.
\end{proof}

\begin{lemma}
\label{gb00mgl}
For all $l_1,l_2\in\zp$, one has $\gb^{l_1,l_2}_{0\dots 0}\in\agn_{l_2}$.
\end{lemma}
\begin{proof}
According to~\er{gagb00}, we have $\gb^{0,l_2}_{0\dots 0}=0$. 
Therefore, it is sufficient 
to prove $\gb^{l_1,l_2}_{0\dots 0}\in\agn_{l_2}$ for $l_1>0$.

Note that property~\er{gagb00} implies 
\beq
\lb{gatga0}
\ga\Big|_{u_k=a_k,\ k\ge 0}=0,\qquad\qquad
\frac{\pd}{\pd t}(\ga)\,\bigg|_{u_k=a_k,\ k\ge 0}=0.
\ee
In view of~\er{gbsumxt}, one has  
\beq
\lb{pdxgb0}
\frac{\pd}{\pd x}(\gb)\,\bigg|_{u_k=a_k,\ k\ge 0}=
\sum_{l_1>0,\ l_2\ge 0}l_1(x-x_a)^{l_1-1} (t-t_a)^{l_2}\cdot\gb^{l_1,l_2}_{0\dots 0}.
\ee
Substituting $u_k=a_k$ for all $k\in\zp$ in~\er{zcrdet} and 
using~\er{gatga0},~\er{pdxgb0}, we get 
\begin{multline}
\lb{suml1gb0}
\sum_{l_1>0,\ l_2\ge 0} 
l_1(x-x_a)^{l_1-1} (t-t_a)^{l_2}\cdot\gb^{l_1,l_2}_{0\dots 0}=\\
=-\bigg(\sum_{k=0}^{\ocs+\eo-1} u_{k+1}\frac{\pd}{\pd u_k}(\gb)\bigg)
\,\bigg|_{u_k=a_k,\ k\ge 0}
+\bigg(\sum_{k=0}^\ocs D_x^k(F)\frac{\pd}{\pd u_k}(\ga)\bigg)\,\bigg|_{u_k=a_k,\ k\ge 0}.
\end{multline}
Combining~\er{gasumxt},~\er{gbsumxt},~\er{suml1gb0}, 
we see that for any $l_1>0$ and $l_2\ge 0$ the element $\gb^{l_1,l_2}_{0\dots 0}$
is equal to a linear combination of elements of the form 
\beq
\label{gagabj1}
\ga^{l'_1,l_2}_{i_0\dots i_\ocs},\qquad
\gb^{l'_1,l_2}_{j_0\dots j_{\ocs+\eo-1}},\qquad
l'_1,i_0,\dots,i_\ocs,j_0,\dots,j_{\ocs+\eo-1}\in\zp,\qquad 
j_0+\dots+j_{\ocs+\eo-1}=1.
\ee
According to Lemma~\ref{gbllalgl} and the definition of $\agn_{l_2}$, 
the elements~\er{gagabj1} belong to $\agn_{l_2}$. 
Thus $\gb^{l_1,l_2}_{0\dots 0}\in\agn_{l_2}$.
\end{proof}

\begin{lemma}
\label{gallmg}
For all $l_1,l,i_0,\dots,i_\ocs\in\zp$, we have 
$\ga^{l_1,l+1}_{i_0\dots i_\ocs}\in\agn_l$.
\end{lemma}
\begin{proof}
Using~\er{gasumxt}, we can rewrite equation~\er{zcrdet} as
\begin{multline*}
\sum_{l_1,l,i_0,\dots,i_\ocs\ge 0}(l+1) 
(x-x_a)^{l_1} (t-t_a)^{l}(u_0-a_0)^{i_0}\dots(u_\ocs-a_\ocs)^{i_\ocs}\cdot\ga^{l_1,l+1}_{i_0\dots i_\ocs}=\\
=\frac{\pd}{\pd x}(\gb)+\sum_{k=0}^{\ocs+\eo-1} u_{k+1}\frac{\pd}{\pd u_k}(\gb)
-\sum_{k=0}^\ocs D_x^k(F)\frac{\pd}{\pd u_k}(\ga)+[\ga,\gb].
\end{multline*}
This implies that $\ga^{l_1,l+1}_{i_0\dots i_\ocs}$ 
is equal to a linear combination of elements of the form 
\beq
\label{elemgall}
\ga^{\hat{l}_1,\hat{l}_2}_{\hat{\imath}_0\dots\hat{\imath}_\ocs},\quad
\gb^{\tilde{l}_1,\tilde{l}_2}_{\tilde{\jmath}_0\dots\tilde{\jmath}_{\ocs+\eo-1}},\quad
\Big[\ga^{\hat{l}_1,\hat{l}_2}_{\hat{\imath}_0\dots\hat{\imath}_\ocs},
\gb^{\tilde{l}_1,\tilde{l}_2}_{\tilde{\jmath}_0\dots\tilde{\jmath}_{\ocs+\eo-1}}\Big],\quad
\hat{l}_2\le l,\quad \tilde{l}_2\le l,\quad 
\hat{\imath}_0,\dots,\hat{\imath}_\ocs,\tilde{\jmath}_0,\dots,
\tilde{\jmath}_{\ocs+\eo-1}\in\zp.
\ee
Using Lemmas~\ref{gbllalgl},~\ref{gb00mgl} and the condition~$\tilde{l}_2\le l$, we get 
$\gb^{\tilde{l}_1,\tilde{l}_2}_{\tilde{\jmath}_0\dots\tilde{\jmath}_{\ocs+\eo-1}}\in\agn_{\tilde{l}_2}\subset\agn_{l}$. 
Therefore, the elements~\er{elemgall} belong to $\agn_{l}$. 
Hence $\ga^{l_1,l+1}_{i_0\dots i_\ocs}\in\agn_{l}$.
\end{proof}

Now we return to the proof of Theorem~\ref{lemgenfdq}.
According to Lemmas~\ref{gbllalgl},~\ref{gb00mgl} and the definition of~$\agn_{l}$, 
we have 
$\ga^{l_1,l_2}_{i_0\dots i_\ocs},\gb^{l_1,l_2}_{j_0\dots j_{\ocs+\eo-1}}\in\agn_{l_2}$ 
for all $l_1,l_2,i_0,\dots i_\ocs,j_0,\dots,j_{\ocs+\eo-1}\in\zp$. 
Lemma~\ref{gallmg} implies that 
\beq
\notag
\agn_{l_2}\subset\agn_{l_2-1}\subset\agn_{l_2-2}\subset\dots\subset\agn_0.
\ee
Therefore, $\fds^{\ocs}(\CE,a)$ is equal to $\agn_0$, which is generated by the elements~\er{gal1alprop}.
\end{proof}

\section{Relations between $\fds^0(\CE,a)$ and the Wahlquist-Estabrook prolongation algebra}
\lb{sec_we_alg}

Consider a scalar evolution equation of the form
\begin{gather}
\label{gevxt}
u_t
=F(u_0,u_1,\dots,u_{\eo}),\qquad\quad u=u(x,t),\qquad\quad u_k=\frac{\pd^k u}{\pd x^k},\qquad\quad u_0=u. 
\end{gather}
Note that the function $F$ in~\er{gevxt} does not depend on $x$, $t$. 

Let $\CE$ be the infinite prolongation of equation~\er{gevxt}. 
Recall that $x$, $t$, $u_k$ are regarded as coordinates on the manifold $\CE$.
A point $a\in\CE$ is determined by the values of $x$, $t$, $u_k$ at $a$.
Let 
\begin{equation}
\label{pointxt}
a=(x=x_a,\,t=t_a,\,u_k=a_k)\,\in\,\CE,\qquad\qquad x_a,\,t_a,\,a_k\in\fik,\qquad k\in\zp,
\end{equation}
be a point of $\CE$.
The constants $x_a$, $t_a$, $a_k$ are the coordinates 
of~$a$ in the coordinate system $x$, $t$, $u_k$.

The \emph{Wahlquist-Estabrook prolongation algebra} 
of equation~\er{gevxt} at the point~\er{pointxt} 
can be defined in terms of generators and relations as follows.
Consider formal power series 
\beq
\label{wgawgbser}
\wga=\sum_{i\ge 0}(u_0-a_0)^{i}\cdot\wga_{i},\qquad\qquad
\wgb=\sum_{j_0,\dots,j_{\eo-1}\ge 0} 
(u_0-a_0)^{j_0}\dots(u_{\eo-1}-a_{\eo-1})^{j_{\eo-1}}\cdot
\wgb_{j_0\dots j_{\eo-1}},
\ee 
where 
\beq
\lb{wgawgbi}
\wga_{i},\qquad \wgb_{j_0\dots j_{\eo-1}},\quad\qquad i,j_0,\dots,j_{\eo-1}\in\zp, 
\ee
are generators of a Lie algebra, which is described below.
The equation 
\beq
\lb{wxgbtga}
D_x(\wgb)-D_t(\wga)+[\wga,\wgb]=0
\ee
is equivalent to some Lie algebraic relations for~\er{wgawgbi}. 
The Wahlquist-Estabrook prolongation algebra (WE algebra for short) 
is given by the generators~\er{wgawgbi} and the relations arising from~\er{wxgbtga}.
A more detailed definition of the WE algebra is presented in~\cite{mll-2012}.
We denote this Lie algebra by $\wea$.

Then \er{wgawgbser}, \er{wxgbtga} is called 
the \emph{formal Wahlquist-Estabrook ZCR with coefficients in~$\wea$}.

The right-hand side $F=F(u_0,u_1,\dots,u_{\eo})$ 
of~\er{gevxt} appears in equation~\er{wxgbtga}, 
because $F$ appears in the formula 
$D_t=\frac{\pd}{\pd t}+\sum_{k\ge 0} D_x^k(F)\frac{\pd}{\pd u_k}$ 
for the total derivative operator $D_t$.
We are going to show that the algebra $\fds^0(\ce,a)$ for equation~\er{gevxt} 
is isomorphic to some subalgebra of~$\wea$. 

According to Remark~\ref{rem_fdpgen}, the algebra $\fds^0(\ce,a)$ 
is generated by $\ga^{l_1,l_2}_{i}$, $\gb^{l_1,l_2}_{j_0\dots j_{\eo-1}}$.
According to~\er{gagb00}, one has $\ga^{l_1,l_2}_{0}=\gb^{0,l_2}_{0\dots 0}=0$ 
for all $l_1,l_2$.

Since equation~\er{gevxt} is invariant 
with respect to the change of variables $x\mapsto x-x_a,\ t\mapsto t-t_a$, 
we can assume $x_a=t_a=0$ in~\er{pointxt}. 
Since $\ga^{l_1,l_2}_{0}=\gb^{0,l_2}_{0\dots 0}=0$ and $x_a=t_a=0$, 
in the case $\ocs=0$ the power series~\er{gasumxt},~\er{gbsumxt},~\er{xgbtga} 
are written as 
\begin{gather}
\lb{gagab0}
\ga=\sum_{l_1,l_2\ge 0,\ i>0} 
x^{l_1} t^{l_2}(u_0-a_0)^{i}\cdot\ga^{l_1,l_2}_{i},\\
\gb=\sum_{l_1,l_2,j_0,\dots,j_{\eo-1}\ge 0} 
x^{l_1} t^{l_2}(u_0-a_0)^{j_0}\dots(u_{\eo-1}-a_{\eo-1})^{j_{\eo-1}}\cdot
\gb^{l_1,l_2}_{j_0\dots j_{\eo-1}},\qquad\quad 
\gb^{0,l_2}_{0\dots 0}=0,\\
\lb{gagab0zcr}
D_x(\gb)-D_t(\ga)+[\ga,\gb]=0,\qquad\qquad 
\ga^{l_1,l_2}_{i},\,\gb^{l_1,l_2}_{j_0\dots j_{\eo-1}}\in\fds^0(\ce,a).
\end{gather}

The next lemma follows from the definition of $\fds^0(\ce,a)$.
\begin{lemma}
\label{lemfd0zcr}
Let $\bl$ be a Lie algebra.
Consider formal power series of the form  
\begin{gather*}
P=\sum_{l_1,l_2\ge 0,\ i>0} 
x^{l_1} t^{l_2}(u_0-a_0)^{i}\cdot P^{l_1,l_2}_{i},\qquad\quad P^{l_1,l_2}_{i}\in\bl,\\
Q=\sum_{l_1,l_2,j_0,\dots,j_{\eo-1}\ge 0} 
x^{l_1} t^{l_2}(u_0-a_0)^{j_0}\dots(u_{\eo-1}-a_{\eo-1})^{j_{\eo-1}}\cdot
Q^{l_1,l_2}_{j_0\dots j_{\eo-1}},\qquad\  
Q^{l_1,l_2}_{j_0\dots j_{\eo-1}}\in\bl,\quad  
Q^{0,l_2}_{0\dots 0}=0.
\end{gather*}
If $D_x(Q)-D_t(P)+[P,Q]=0$, then 
the map $\ga^{l_1,l_2}_{i}\mapsto P^{l_1,l_2}_{i},\,\ 
\gb^{l_1,l_2}_{j_0\dots j_{\eo-1}}\mapsto Q^{l_1,l_2}_{j_0\dots j_{\eo-1}}$ 
determines a homomorphism from $\fds^0(\ce,a)$ to $\bl$. 
\end{lemma}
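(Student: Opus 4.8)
The plan is to invoke the presentation of $\fds^0(\ce,a)$ as a quotient $\frl/\frid$ of a free Lie algebra, described in Remark~\ref{rem_fdpgen} with $\ocs=0$, and to produce the desired homomorphism through the universal property of free Lie algebras. First I would use that universal property: since $\frl$ is freely generated by the symbols $\fla^{l_1,l_2}_{i}$ and $\flb^{l_1,l_2}_{j_0\dots j_{\eo-1}}$, the assignment $\fla^{l_1,l_2}_{i}\mapsto P^{l_1,l_2}_{i}$, $\flb^{l_1,l_2}_{j_0\dots j_{\eo-1}}\mapsto Q^{l_1,l_2}_{j_0\dots j_{\eo-1}}$ extends to a unique Lie algebra homomorphism $\theta\cl\frl\to\bl$. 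It then remains to check that $\theta$ annihilates the ideal $\frid$, for then $\theta$ descends to $\frl/\frid=\fds^0(\ce,a)$ and, by construction, sends $\ga^{l_1,l_2}_{i}=\psi(\fla^{l_1,l_2}_{i})$ to $P^{l_1,l_2}_{i}$ and $\gb^{l_1,l_2}_{j_0\dots j_{\eo-1}}$ to $Q^{l_1,l_2}_{j_0\dots j_{\eo-1}}$, as required.

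The central step is to verify that $\theta$ kills each family of generators of $\frid$. Recall that for $\ocs=0$ the ideal $\frid$ is generated by the coefficients $\flz^{l_1,l_2}_{q_0\dots q_{\eo}}$ of $D_x(\flb)-D_t(\fla)+[\fla,\flb]$, together with $\fla^{l_1,l_2}_{0}$ and $\flb^{0,l_2}_{0\dots 0}$; the family indexed by $\zcs_\ocs$ is empty since $\zcs_0=\varnothing$. The vanishing of $\theta$ on the last two families is immediate from the hypotheses: the series $P$ in the statement is summed only over $i>0$, so $P^{l_1,l_2}_{0}=0=\theta(\fla^{l_1,l_2}_{0})$, and $Q^{0,l_2}_{0\dots 0}=0=\theta(\flb^{0,l_2}_{0\dots 0})$ by assumption. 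For the coefficients $\flz^{l_1,l_2}_{q_0\dots q_{\eo}}$ I would use the naturality of the power-series operations of Remark~\ref{psdxdtlie}: extending $\theta$ coefficientwise to a map $\hat\theta$ on formal power series, one has $\hat\theta(\fla)=P$ and $\hat\theta(\flb)=Q$, and $\hat\theta$ commutes with $D_x$, $D_t$ and the bracket. This is because, in the definitions~\er{dxck},~\er{dtck} and the bracket formula of Remark~\ref{psdxdtlie}, $D_x$ and $D_t$ act only on the scalar monomials in $x,t,u_k$ (using the Taylor series of $F$ for $D_t$) and leave the Lie-algebra coefficients untouched, while the bracket is applied coefficientwise; a Lie algebra homomorphism applied coefficientwise therefore passes through all three operations.

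Granting this compatibility, one computes
$$
\hat\theta\big(D_x(\flb)-D_t(\fla)+[\fla,\flb]\big)=D_x(Q)-D_t(P)+[P,Q]=0,
$$
where the last equality is the hypothesis of the lemma. Since the left-hand side is the power series whose coefficients are $\theta(\flz^{l_1,l_2}_{q_0\dots q_{\eo}})$, the vanishing of the series forces $\theta(\flz^{l_1,l_2}_{q_0\dots q_{\eo}})=0$ for all indices. Thus $\theta$ annihilates all the generators of $\frid$, hence the whole ideal, and the induced homomorphism $\fds^0(\ce,a)\to\bl$ is the desired map. The only point requiring care---and the nearest thing to an obstacle---is the coefficientwise naturality of $D_x$, $D_t$ and the bracket; but this is a direct consequence of their definitions in Remark~\ref{psdxdtlie}, so no genuine difficulty arises.
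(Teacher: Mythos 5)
Your proposal is correct and is exactly the argument the paper intends: the paper simply states that the lemma ``follows from the definition of $\fds^0(\ce,a)$,'' and your write-up spells out that definition-chase --- the universal property of the free Lie algebra $\frl$, the verification that the generators of $\frid$ (including the correctly-noted empty family $\zcs_0=\varnothing$) are annihilated, and the coefficientwise compatibility of $D_x$, $D_t$ and the bracket from Remark~\ref{psdxdtlie}. No discrepancy with the paper's (implicit) proof.
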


Let $\bl$ be a Lie algebra. 
A \emph{formal ZCR of Wahlquist-Estabrook type with coefficients in $\bl$} 
is given by formal power series  
\begin{gather}
\lb{pqmg}
M=\sum_{i\ge 0}(u_0-a_0)^{i}\cdot M_{i},\qquad\quad
N=\sum_{j_0,\dots,j_{\eo-1}\ge 0} 
(u_0-a_0)^{j_0}\dots(u_{\eo-1}-a_{\eo-1})^{j_{\eo-1}}\cdot
N_{j_0\dots j_{\eo-1}},\\
\notag
M_{i},\,N_{j_0\dots j_{\eo-1}}\in\bl,
\end{gather}
satisfying 
\beq
\lb{pqzcr}
D_x(N)-D_t(M)+[M,N]=0.
\ee

The next lemma follows from the definition of the WE algebra~$\wea$.
\begin{lemma}
\label{propwezcr}
Any formal ZCR of Wahlquist-Estabrook type~\er{pqmg},~\er{pqzcr}  
with coefficients in~$\bl$ 
determines a homomorphism $\wea\to\bl$ given by 
$\wga_{i}\mapsto M_{i},\ \wgb_{j_0\dots j_{\eo-1}}\mapsto N_{j_0\dots j_{\eo-1}}$. 
\end{lemma}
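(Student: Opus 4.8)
The plan is to read this as a routine verification of the universal property of the presentation of $\wea$ by generators and relations. Recall that, by the definition recalled just above, $\wea$ is the quotient of the free Lie algebra $\mathfrak{F}_0$ on the symbols~\er{wgawgbi} by the ideal generated by the coefficients of the power series $D_x(\wgb)-D_t(\wga)+[\wga,\wgb]$, where $\wga$, $\wgb$ are the series~\er{wgawgbser} regarded as having coefficients in $\mathfrak{F}_0$. First I would invoke the universal property of the free Lie algebra: the assignment $\wga_i\mapsto M_i$, $\wgb_{j_0\dots j_{\eo-1}}\mapsto N_{j_0\dots j_{\eo-1}}$ extends uniquely to a Lie algebra homomorphism $\Theta\colon\mathfrak{F}_0\to\bl$. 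It then remains only to show that $\Theta$ annihilates the defining relations, for then $\Theta$ descends to the desired homomorphism $\wea\to\bl$.

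The key observation is a naturality statement. Extend $\Theta$ coefficientwise to power series whose coefficients lie in $\mathfrak{F}_0$, in the sense of Remark~\ref{psdxdtlie}; this extension sends $\wga\mapsto M$ and $\wgb\mapsto N$, since these pairs of series carry the same scalar monomials in the $u_k$ and differ only by their coefficients, which $\Theta$ matches. I would then check that $\Theta$ commutes with each of the three operations $D_x$, $D_t$, and $[\,\cdot\,,\,\cdot\,]$ occurring in~\er{wxgbtga}. For $D_x$ and $D_t$ this holds because, by Remark~\ref{psdxdtlie}, these operators act only on the scalar monomials (differentiating them and, for $D_t$, inserting the Taylor series of $F$), while leaving the Lie-algebra-valued coefficients untouched; as $\Theta$ is linear and commutes with multiplication by scalar monomials, it commutes with $D_x$ and $D_t$. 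For the bracket this holds because $\Theta$ is a Lie homomorphism and $[\,\cdot\,,\,\cdot\,]$ is computed coefficientwise.

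Combining these facts, $\Theta$ carries each coefficient of $D_x(\wgb)-D_t(\wga)+[\wga,\wgb]$ to the corresponding coefficient of $D_x(N)-D_t(M)+[M,N]$. By the hypothesis~\er{pqzcr} this latter series vanishes identically, so every such coefficient maps to $0\in\bl$. Hence $\Theta$ sends all the generators of the defining ideal of $\wea$ to zero, and therefore factors through a homomorphism $\wea\to\bl$ acting on generators as $\wga_i\mapsto M_i$, $\wgb_{j_0\dots j_{\eo-1}}\mapsto N_{j_0\dots j_{\eo-1}}$, as required.

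I do not expect any genuine obstacle here: the entire content is the compatibility of $\Theta$ with $D_x$, $D_t$, and the bracket, which is a formal consequence of the coefficientwise definitions in Remark~\ref{psdxdtlie}. The only point deserving a moment's care is the commutation with $D_t$, since $D_t$ is assembled from the Taylor expansion of $F$; but the same operator $D_t$, determined by the same equation~\er{gevxt}, enters both the definition of the relations of $\wea$ and the ZCR condition~\er{pqzcr}, so the needed consistency is automatic. This is exactly the analogue, for the Wahlquist--Estabrook algebra, of Lemma~\ref{lemfd0zcr} for $\fds^0(\ce,a)$.
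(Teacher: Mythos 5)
Your proposal is correct and is essentially the paper's own argument: the paper simply asserts that the lemma ``follows from the definition of the WE algebra $\wea$,'' and what you have written is the routine verification behind that assertion (universal property of the free Lie algebra plus the fact that $D_x$, $D_t$, and the bracket act coefficientwise, so the defining relations are sent to the vanishing coefficients of $D_x(N)-D_t(M)+[M,N]$). No discrepancy with the paper.
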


\begin{remark}
\lb{remunxi}
For any Lie algebra $\bl$, 
there is a (possibly infinite-dimensional) vector space $V$ such that 
$\bl$ is isomorphic to a Lie subalgebra of $\gl(V)$.
Here $\gl(V)$ is the algebra of linear maps $V\to V$.

For example, one can use the following construction. 
Denote by $\un(\bl)$ the universal enveloping algebra of $\bl$.
We have the injective homomorphism of Lie algebras 
\beq
\notag
\xi\cl \bl\hookrightarrow\gl(\un(\bl)),\quad\qquad \xi(v)(w)=vw,
\quad\qquad v\in \bl,\quad\qquad w\in\un(\bl).
\ee
So one can set $V=\un(\bl)$.
\end{remark}

Denote by $\zf$ the vector space of formal power series in variables $z_1,\,z_2$ 
with coefficients in $\fds^0(\ce,a)$. That is, 
an element of $\zf$ is a power series of the form 
$$
\sum_{l_1,l_2\in\zp}z_1^{l_1}z_2^{l_2}C^{l_1l_2},\qquad\qquad C^{l_1l_2}\in\fds^0(\ce,a).
$$
The space $\zf$ has the Lie algebra structure given by 
$$
\bigg[\sum_{l_1,l_2}z_1^{l_1}z_2^{l_2}C^{l_1l_2},\,
\sum_{\tilde{l}_1,\tilde{l}_2}
z_1^{\tilde{l}_1}z_2^{\tilde{l}_2}
\tilde{C}^{\tilde{l}_1\tilde{l}_2}\bigg]=
\sum_{l_1,l_2,\tilde{l}_1,\tilde{l}_2} z_1^{l_1+\tilde{l}_1}z_2^{l_2+\tilde{l}_2}
\Big[C^{l_1l_2},\tilde{C}^{\tilde{l}_1\tilde{l}_2}\Big],\qquad\quad
C^{l_1l_2},\tilde{C}^{\tilde{l}_1\tilde{l}_2}\in\fds^0(\ce,a).
$$
We have also the following homomorphism of Lie algebras 
\beq
\lb{zffd0}
\nu\cl\zf\to\fds^0(\ce,a),\qquad\qquad
\nu\bigg(\sum_{l_1,l_2\in\zp}z_1^{l_1}z_2^{l_2}C^{l_1l_2}\bigg)=C^{00}. 
\ee

For $i=1,2$, let $\pd_{z_i}\cl \zf\to\zf$ be the linear map given by 
$\pd_{z_i}\big(\sum z_1^{l_1}z_2^{l_2}C^{l_1l_2}\big)=
\sum\frac{\pd}{\pd z_i}\big(z_1^{l_1}z_2^{l_2}\big) C^{l_1l_2}$.
 
Let $\zd$ be the linear span of $\pd_{z_1},\,\pd_{z_2}$ in the vector space of linear maps $\zf\to\zf$. 
Since the maps $\pd_{z_1},\,\pd_{z_2}$ commute, 
the space~$\zd$ is a $2$-dimensional abelian Lie algebra 
with respect to the commutator of maps. 

Denote by $\xtf$ the vector space $\zd\oplus\zf$ with the following Lie algebra structure 
$$
[X_1+f_1,\,X_2+f_2]=X_1(f_2)-X_2(f_1)+[f_1,f_2],\qquad X_1,X_2\in \zd,
\qquad f_1,f_2\in\zf.
$$   
An element of $\xtf$ can be written as a sum of the following form 
$$
\big(y_1\pd_{z_1}+y_2\pd_{z_2}\big)+
\sum z_1^{l_1}z_2^{l_2}C^{l_1l_2},\qquad\quad 
y_1,y_2\in\fik,\qquad C^{l_1l_2}\in\fds^0(\ce,a).
$$

\begin{theorem}
\lb{thmhfd0}
Let $\swe\subset\wea$ be the subalgebra generated by the elements
\beq
\lb{elmh}
(\ad\wga_0)^{k}(\wga_i),\qquad\qquad k\in\zp,\qquad i\in\zsp.
\ee
Then the map $(\ad\wga_0)^{k}(\wga_i)\,\mapsto\,k!\cdot\ga^{k,0}_i$, $k\in\zp$,
determines an isomorphism between $\swe$ and $\fds^0(\CE,a)$.

\textup{(}Note that for $k=0$ we have $(\ad\wga_0)^{0}(\wga_i)=\wga_i$, hence 
$\wga_i\in\swe$ for all $i\in\zsp$.\textup{)}
\end{theorem}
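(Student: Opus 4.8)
The plan is to build mutually inverse homomorphisms between $\swe$ and $\fds^0(\ce,a)$ by feeding suitable zero-curvature representations into the universal properties recorded in Lemmas~\ref{propwezcr} and~\ref{lemfd0zcr}. Since $F$ does not depend on $x,t$, I assume $x_0=t_0=0$ throughout, so that the sums~\er{gagab0}--\er{gagab0zcr} are available.

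First I would construct a homomorphism $\Theta\cl\swe\to\fds^0(\ce,a)$ using the auxiliary Lie algebra $\xtf=\zd\oplus\zf$. Define a formal ZCR of Wahlquist--Estabrook type with coefficients in $\xtf$ by
\begin{gather*}
M=\pd_{z_1}+\sum_{i\ge 1}(u_0-a_0)^i\sum_{l_1,l_2}z_1^{l_1}z_2^{l_2}\,\ga^{l_1,l_2}_i,\\
N=\pd_{z_2}+\sum_{j_0,\dots,j_{\eo-1}}(u_0-a_0)^{j_0}\dots(u_{\eo-1}-a_{\eo-1})^{j_{\eo-1}}\sum_{l_1,l_2}z_1^{l_1}z_2^{l_2}\,\gb^{l_1,l_2}_{j_0\dots j_{\eo-1}},
\end{gather*}
so that $z_1,z_2$ play the role of $x,t$ inside $\zf$ (they are inert under the Wahlquist--Estabrook operators $D_x,D_t$, which see only the $u_k$), while the derivations $\pd_{z_1},\pd_{z_2}$ encode $\pd/\pd x$ and $\pd/\pd t$. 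A direct computation with the bracket of $\xtf$ shows that $[\pd_{z_1},\,\cdot\,]$ and $[\,\cdot\,,\pd_{z_2}]$ appearing in $[M,N]$ reproduce exactly the $\frac{\pd}{\pd x}(\gb)$ and $-\frac{\pd}{\pd t}(\ga)$ contributions; hence $D_x(N)-D_t(M)+[M,N]=0$ is equivalent to the relation $D_x(\gb)-D_t(\ga)+[\ga,\gb]=0$ holding in $\fds^0(\ce,a)$. By Lemma~\ref{propwezcr} this yields $\Psi\cl\wea\to\xtf$ with $\Psi(\wga_0)=\pd_{z_1}$ and $\Psi(\wga_i)=\sum_{l_1,l_2}z_1^{l_1}z_2^{l_2}\ga^{l_1,l_2}_i$ for $i\ge 1$. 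Since $\zf$ is an ideal of $\xtf$ and $\ad\pd_{z_1}$ restricts to $\pd_{z_1}$ on $\zf$, the generators $(\ad\wga_0)^k(\wga_i)$ of $\swe$ are sent into $\zf$; composing with $\nu\cl\zf\to\fds^0(\ce,a)$ and extracting the $z_1^0z_2^0$-coefficient, I obtain $\Theta=\nu\circ\Psi|_{\swe}$ with $\Theta\big((\ad\wga_0)^k(\wga_i)\big)=k!\cdot\ga^{k,0}_i$.

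Surjectivity of $\Theta$ is then immediate, because by Theorem~\ref{lemgenfdq} the elements $\ga^{k,0}_i$ generate $\fds^0(\ce,a)$ and lie in the image. For injectivity I would produce a left inverse $\Xi\cl\fds^0(\ce,a)\to\wea$ via Lemma~\ref{lemfd0zcr}. Realizing $\wea$ inside $\gl(V)$ with $V=\un(\wea)$ (Remark~\ref{remunxi}), I carry out the $\ocs=0$ gauge procedure of Theorem~\ref{evcov} applied to $(\wga,\wgb)$: conjugation by $\exp(x\,\wga_0)$ followed by conjugation by $\exp(t\,c)$ with $c=\wgb_{0\dots0}$ transforms it into
$$
P=e^{t\ad c}\big(e^{x\ad\wga_0}(\wga-\wga_0)\big),\qquad
Q=e^{t\ad c}\big(e^{x\ad\wga_0}(\wgb)\big)-c,
$$
whose coefficients lie in $\wea$ since every operation is an iterated $\ad$-action. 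One verifies that $P$ has no constant term in $u_0-a_0$ and that $Q\,\big|_{x=0,\,u_k=a_k}=0$, so $(P,Q)$ meets the hypotheses of Lemma~\ref{lemfd0zcr} and defines $\Xi$ with $\ga^{l_1,l_2}_i\mapsto P^{l_1,l_2}_i$. Reading off the $t^0$-part gives $\Xi(\ga^{k,0}_i)=\frac{1}{k!}(\ad\wga_0)^k(\wga_i)$, so $\Xi\circ\Theta$ is the inclusion $\swe\hookrightarrow\wea$, whence $\Theta$ is injective and therefore an isomorphism.

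The two ZCR identities are routine, and the main obstacle is the rigorous handling of the gauge transformation in the last step: because $\wea$ may be infinite-dimensional, Theorem~\ref{evcov} does not apply verbatim, and I must instead justify the explicit formulas for $P,Q$ directly in $\gl(\un(\wea))$. Concretely, this means checking $D_t\big(\exp(t\,c)\big)\cdot\exp(-t\,c)=c$ together with $D_x\big(\exp(x\,\wga_0)\big)\cdot\exp(-x\,\wga_0)=\wga_0$, verifying the two normal-form conditions, and confirming that the conjugations collapse to $\ad$-actions so that the coefficients return to $\wea$ rather than merely to $\un(\wea)$. The transparency of the $\ocs=0$ case --- where the only gauge factors needed are exponentials of the constants $\wga_0$ and $\wgb_{0\dots0}$ --- is precisely what makes this verification tractable.
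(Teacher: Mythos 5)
Your proposal is correct and follows essentially the same route as the paper: the same auxiliary algebra $\xtf=\zd\oplus\zf$ and Lemma~\ref{propwezcr} for the map $\swe\to\fds^0(\CE,a)$, and the same conjugated pair $P,Q$ (your $\ad$-exponential formulas agree with the paper's $\mathrm{e}^{t\wgb_0}\mathrm{e}^{x\wga_0}(\cdot)\mathrm{e}^{-x\wga_0}\mathrm{e}^{-t\wgb_0}$ expressions) fed into Lemma~\ref{lemfd0zcr} for the inverse, with the paper likewise verifying the ZCR identity for $P,Q$ directly in $\gl(\un(\wea))$ rather than via Theorem~\ref{evcov}.
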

\begin{proof}
We have 
$D_x=\frac{\pd}{\pd x}+\sum_{k\ge 0} u_{k+1}\frac{\pd}{\pd u_k}$ and 
$D_t=\frac{\pd}{\pd t}+\sum_{k\ge 0} D_x^k(F)\frac{\pd}{\pd u_k}$, 
where $F=F(u_0,u_1,\dots,u_{\eo})$ is given in~\er{gevxt}.
Equation~\er{gagab0zcr} is equivalent to
\begin{multline}
\lb{vlong}
\sum_{l_1,l_2,j_0,\dots,j_{\eo-1}} 
\frac{\pd}{\pd x}
\big(x^{l_1} t^{l_2}\big)(u_0-a_0)^{j_0}\dots(u_{\eo-1}-a_{\eo-1})^{j_{\eo-1}}
\cdot
\gb^{l_1,l_2}_{j_0\dots j_{\eo-1}}+\\
+\sum_{l_1,l_2,j_0,\dots,j_{\eo-1}}
x^{l_1} t^{l_2}D_x\Big((u_0-a_0)^{j_0}\dots(u_{\eo-1}-a_{\eo-1})^{j_{\eo-1}}\Big)
\cdot\gb^{l_1,l_2}_{j_0\dots j_{\eo-1}}\\
-\sum_{l_1,l_2,i}\frac{\pd}{\pd t}\big(x^{l_1} t^{l_2}\big)
(u_0-a_0)^{i}\cdot\ga^{l_1,l_2}_{i}-
\sum_{l_1,l_2,i}x^{l_1} t^{l_2}D_t\big((u_0-a_0)^{i}\big)\cdot\ga^{l_1,l_2}_{i}
+[\ga,\gb]=0.
\end{multline}

We regard the expressions 
\begin{gather}
\label{zgaser}
\tilde{\wga}=\pd_{z_1}+\sum_{i>0} (u_0-a_0)^{i}\cdot
\bigg(\sum_{l_1,l_2}z_1^{l_1}z_2^{l_2}\ga^{l_1,l_2}_i\bigg),\\
\lb{zgbser}
\tilde{\wgb}=\bigg(\pd_{z_2}+\sum_{l_1,l_2}z_1^{l_1}z_2^{l_2}\gb^{l_1,l_2}_{0\dots 0}\bigg)
+\sum_{\substack{j_0,\dots,j_{\eo-1}\ge 0,\\ j_0+\dots+j_{\eo-1}>0}}
(u_0-a_0)^{j_0}\dots(u_{\eo-1}-a_{\eo-1})^{j_{\eo-1}}
\cdot\bigg(\sum_{l_1,l_2}z_1^{l_1}z_2^{l_2}
\gb^{l_1,l_2}_{j_0\dots j_{\eo-1}}\bigg)
\end{gather}  
as formal power series with coefficients in~$\xtf$. 

Since the function $F$ in~\er{gevxt} does not depend on $x$ and $t$, 
equation~\er{vlong} is equivalent to 
$$
D_x\big(\tilde{\wgb}\big)-D_t\big(\tilde{\wga}\big)+\big[\tilde{\wga},\tilde{\wgb}\big]=0,
$$ 
which implies that the power series~\er{zgaser},~\er{zgbser} 
constitute a formal ZCR of Wahlquist-Estabrook type with coefficients in~$\xtf$. 

Applying Lemma~\ref{propwezcr} to this formal ZCR, we obtain the homomorphism  
\begin{gather}
\lb{weaxtfa}
\vf\cl\wea\to\xtf,\qquad 
\vf(\wga_0)=\pd_{z_1},\qquad\vf(\wga_i)= 
\sum_{l_1,l_2}z_1^{l_1}z_2^{l_2}\ga^{l_1,l_2}_i,\qquad i>0,\\
\notag
\vf(\wgb_{0\dots 0})=
\bigg(\pd_{z_2}+\sum_{l_1,l_2}z_1^{l_1}z_2^{l_2}\gb^{l_1,l_2}_{0\dots 0}\bigg),
\quad
\vf(\wgb_{j_0\dots j_{\eo-1}})=
\bigg(\sum_{l_1,l_2}z_1^{l_1}z_2^{l_2}
\gb^{l_1,l_2}_{j_0\dots j_{\eo-1}}\bigg),\quad j_0+\dots+j_{\eo-1}>0. 
\end{gather}

Clearly, $\zf$ is a Lie subalgebra of $\xtf=\zd\oplus\zf$. 
In view of~\er{weaxtfa}, for any $k\in\zp$ and $i\in\zsp$ one has
\beq
\label{wga0z1}
\vf\Big((\ad\wga_0)^{k}(\wga_i)\Big)= 
\big(\ad\pd_{z_1}\big)^{k}\bigg(\sum_{l_1,l_2}z_1^{l_1}z_2^{l_2}\ga^{l_1,l_2}_i\bigg)
=\big(\pd_{z_1}\big)^{k}\bigg(\sum_{l_1,l_2}z_1^{l_1}z_2^{l_2}\ga^{l_1,l_2}_i\bigg)\in\zf.
\ee 
Since $\swe\subset\wea$ is generated by the elements~\er{elmh}, 
property~\er{wga0z1} implies $\vf(\swe)\subset\zf\subset\xtf$. 
Using the homomorphism~\er{zffd0} and property~\er{wga0z1}, 
we get 
\beq
\lb{nuvf}
\nu\circ\vf\big|_{\swe}\cl\swe\to\fds^0(\CE,a),\qquad 
(\nu\circ\vf)\Big((\ad\wga_0)^{k}(\wga_i)\Big)
=k!\cdot\ga^{k,0}_i,\qquad k\in\zp,\quad i\in\zsp.
\ee

Using Remark~\ref{remunxi}, we can assume that $\wea$ is embedded 
in the algebra $\gl(V)$ for some vector space~$V$. 
Let $\zs$ be the vector space of power series of the form 
\beq
\lb{xtukc}
\sum_{l_1,l_2,i_0,\dots,i_k\ge 0} 
x^{l_1} t^{l_2}(u_0-a_0)^{i_0}\dots(u_k-a_k)^{i_k}\cdot C^{l_1,l_2}_{i_0\dots i_k},\qquad
C^{l_1,l_2}_{i_0\dots i_k}\in\gl(V),\qquad k\in\zp.
\ee
Note that $\zs$ contains the power series~\er{xtukc} for all $k\in\zp$. 
For each $C\in\zs$, the power series $D_x(C),D_t(C)\in\zs$ are defined according 
to Remark~\ref{psdxdtlie}. 

Recall that $\gl(V)$ consists of linear maps $V\to V$. 
Since $\gl(V)$ is an associative algebra with respect to the composition of maps, 
the space $\zs$ is an associative algebra with respect to the standard multiplication 
of formal power series. 

Also, using Remark~\ref{psdxdtlie} and the Lie bracket on~$\gl(V)$, 
we obtain a Lie bracket on the space $\zs$.  

We set $\wgb_0=\wgb_{0\dots 0}$, where $\wgb_{0\dots 0}$ is the free term 
of the power series $\wgb$ from~\er{wgawgbser}. 
Since $\wga_{i},\,\wgb_{j_0\dots j_{\eo-1}}\in\wea\subset\gl(V)$ for all 
$i,j_0,\dots,j_{\eo-1}\in\zp$, 
the power series $\mathrm{e}^{x\wga_0}$, $\mathrm{e}^{t\wgb_{0}}$,  
and~\er{wgawgbser} belong to~$\zs$. Set 
\beq
\lb{pqdefee}
P=-\mathrm{e}^{t\wgb_0}\wga_0\mathrm{e}^{-t\wgb_0}+
\mathrm{e}^{t\wgb_0}\mathrm{e}^{x\wga_0}\wga\mathrm{e}^{-x\wga_0}\mathrm{e}^{-t\wgb_0},\qquad\quad
Q=-\wgb_0+
\mathrm{e}^{t\wgb_0}\mathrm{e}^{x\wga_0}\wgb\mathrm{e}^{-x\wga_0}\mathrm{e}^{-t\wgb_0}.
\ee
Using~\er{pqdefee}, we get
\begin{gather}
\lb{dxqee}
D_x(Q)=\mathrm{e}^{t\wgb_0}\big[\wga_0,\,\mathrm{e}^{x\wga_0}\wgb\mathrm{e}^{-x\wga_0}\big]
\mathrm{e}^{-t\wgb_0}
+\mathrm{e}^{t\wgb_0}\mathrm{e}^{x\wga_0}D_x(\wgb)\mathrm{e}^{-x\wga_0}\mathrm{e}^{-t\wgb_0},\\
\lb{dtpee}
D_t(P)=-\big[\wgb_0,\,\mathrm{e}^{t\wgb_0}\wga_0\mathrm{e}^{-t\wgb_0}\big]+
\big[\wgb_0,\,\mathrm{e}^{t\wgb_0}\mathrm{e}^{x\wga_0}\wga\mathrm{e}^{-x\wga_0}
\mathrm{e}^{-t\wgb_0}\big]+
\mathrm{e}^{t\wgb_0}\mathrm{e}^{x\wga_0}D_t(\wga)\mathrm{e}^{-x\wga_0}\mathrm{e}^{-t\wgb_0}.
\end{gather}
Recall that $D_x(\wgb)-D_t(\wga)+[\wga,\wgb]=0$ according to~\er{wxgbtga}. 
Combining this with~\er{pqdefee}, \er{dxqee}, \er{dtpee}, one obtains
\beq
\lb{qpwgba}
D_x(Q)-D_t(P)+[P,Q]=
\mathrm{e}^{t\wgb_0}\mathrm{e}^{x\wga_0}\big(D_x(\wgb)-D_t(\wga)+[\wga,\wgb]\big)
\mathrm{e}^{-x\wga_0}\mathrm{e}^{-t\wgb_0}=0.
\ee
Formulas~\er{wgawgbser},~\er{pqdefee} yield
\begin{multline} 
\lb{eeaee}
P=-\mathrm{e}^{t\wgb_0}\wga_0\mathrm{e}^{-t\wgb_0}+
\sum_{i\ge 0}(u_0-a_0)^{i}\cdot\mathrm{e}^{t\wgb_0}\mathrm{e}^{x\wga_0}
\wga_i\mathrm{e}^{-x\wga_0}\mathrm{e}^{-t\wgb_0}=
\\
=\sum_{l_1,l_2\ge 0,\ i>0}x^{l_1}t^{l_2}(u_0-a_0)^{i}  \frac{1}{l_1!l_2!}(\ad\wgb_0)^{l_2}\Big((\ad\wga_0)^{l_1}(\wga_i)\Big).
\end{multline}
\begin{multline} 
\lb{eebee}
Q=-\wgb_0+\sum_{j_0,\dots,j_{\eo-1}\ge 0} 
(u_0-a_0)^{j_0}\dots(u_{\eo-1}-a_{\eo-1})^{j_{\eo-1}}\cdot
\mathrm{e}^{t\wgb_0}\mathrm{e}^{x\wga_0}\wgb_{j_0\dots j_{\eo-1}}
\mathrm{e}^{-x\wga_0}\mathrm{e}^{-t\wgb_0}=\\
=-\wgb_0+\sum_{l_1,l_2,j_0,\dots,j_{\eo-1}\ge 0}  
x^{l_1}t^{l_2}(u_0-a_0)^{j_0}\dots(u_{\eo-1}-a_{\eo-1})^{j_{\eo-1}}
\cdot\frac{1}{l_1!l_2!}(\ad\wgb_0)^{l_2}
\Big((\ad\wga_0)^{l_1}\big(\wgb_{j_0\dots j_{\eo-1}}\big)\Big).
\end{multline}

From~\er{qpwgba},~\er{eeaee},~\er{eebee} it follows that 
the power series $P$, $Q$ satisfy all conditions of Lemma~\ref{lemfd0zcr}. 
Applying Lemma~\ref{lemfd0zcr} to $P$, $Q$ given by~\er{eeaee},~\er{eebee}, we obtain 
the homomorphism 
\begin{gather}
\lb{psiga}
\psi\cl\fds^0(\CE,a)\to\wea,\qquad
\psi\big(\ga^{l_1,l_2}_i\big)=\frac{1}{l_1!l_2!}
(\ad\wgb_0)^{l_2}\Big((\ad\wga_0)^{l_1}(\wga_i)\Big),
\qquad l_1,l_2\in\zp,\quad i\in\zsp,\\ 
\notag
\psi\big(\gb^{l_1,l_2}_{j_0\dots j_{\eo-1}}\big)=
\frac{1}{l_1!l_2!}(\ad\wgb_0)^{l_2}
\Big((\ad\wga_0)^{l_1}(\wgb_{j_0\dots j_{\eo-1}})\Big),
\quad l_1,l_2,j_0,\dots,j_{\eo-1}\in\zp,\quad j_0+\dots+j_{\eo-1}>0,\\
\notag
\psi\big(\gb^{l'_1,l'_2}_{0\dots 0}\big)=\frac{1}{l'_1!l'_2!}(\ad\wgb_0)^{l'_2}
\Big((\ad\wga_0)^{l'_1}(\wgb_{0\dots 0})\Big),\qquad l'_1\in\zsp,\qquad l'_2\in\zp.
\end{gather}
From~\er{psiga} we get 
\beq
\lb{psigal1}
\psi\big(\ga^{l_1,0}_i\big)=\frac{1}{l_1!}
(\ad\wga_0)^{l_1}(\wga_i)\,\in\,\swe,\qquad l_1\in\zp,
\qquad i\in\zsp.
\ee
Since, by Theorem~\ref{lemgenfdq}, the elements 
$\ga^{l_1,0}_{i}$, $l_1\in\zp$, $i\in\zsp$,
generate 
the algebra $\fds^0(\CE,a)$, property~\er{psigal1} implies 
$\psi\big(\fds^0(\CE,a)\big)\subset\swe$. 
Then from~\er{nuvf},~\er{psigal1} it follows that the 
homomorphisms $\psi\cl\fds^0(\CE,a)\to\swe$ and 
$\nu\circ\vf\big|_{\swe}\cl\swe\to\fds^0(\CE,a)$ are inverse to each other.
\end{proof}

\section{The algebras $\fds^\ocs(\CE,a)$ for the KdV equation}
\lb{fdockdv}

We need the following result, which is proved in~\cite{sint18}.
\begin{theorem}[\cite{sint18}]
\lb{kntkdvtypeth}
Let $\CE$ be the infinite prolongation of an equation of the form~\er{utukd} 
with $\kd\in\{1,2,3\}$. Let $a\in\CE$. 
For each $\ocs\in\zsp$, 
consider the surjective homomorphism $\vf_\ocs\cl\fds^\ocs(\CE,a)\to\fds^{\ocs-1}(\CE,a)$ 
from~\er{intfdoc1}. 

If $\ocs\ge\kd+\delta_{\kd,3}$ then  
\beq
\notag
[v_1,v_2]=0\qquad\qquad\forall\,v_1\in\ker\vf_\ocs,\qquad
\forall\,v_2\in\fds^\ocs(\CE,a).
\ee
In other words, if $\ocs\ge\kd+\delta_{\kd,3}$ then 
the kernel of $\vf_\ocs$ is contained in the center of the Lie 
algebra $\fds^\ocs(\CE,a)$.


For each $k\in\zsp$, let 
$\psi_{k}\colon\fds^{k+\kd-1+\delta_{\kd,3}}(\CE,a)\to\fds^{\kd-1+\delta_{\kd,3}}(\CE,a)$ 
be the composition of the homomorphisms
\beq
\notag
\fds^{k+\kd-1+\delta_{\kd,3}}(\CE,a)\to\fds^{k+\kd-2+\delta_{\kd,3}}(\CE,a)\to\dots
\to\fds^{\kd+\delta_{\kd,3}}(\CE,a)\to\fds^{\kd-1+\delta_{\kd,3}}(\CE,a)
\ee
from~\er{intfdoc1}. Then 
$$
[h_1,[h_2,\dots,[h_{k-1},[h_k,h_{k+1}]]\dots]]=0\qquad\quad\forall\,h_1,\dots,h_{k+1}\in\ker\psi_k.  
$$
In particular, the kernel of~$\psi_k$ is nilpotent.
\end{theorem}


\begin{lemma}
\lb{lf0kdv}
Let $\CE$ be the infinite prolongation of the KdV equation 
$u_t=u_3+u_0u_1$. Let $a\in\CE$. 

Then $\fds^0(\CE,a)$ is isomorphic to the direct sum of $\msl_2(\fik[\la])$ 
and a $3$-dimensional abelian Lie algebra.
\textup{(}The Lie algebra $\msl_2(\fik[\la])$ has been defined in 
Section~\ref{subsint1}.\textup{)}
\end{lemma}
\begin{proof}
We are going to use Theorem~\ref{thmhfd0}, 
which says that $\fds^0(\CE,a)$ is isomorphic to a certain subalgebra 
of the Wahlquist-Estabrook prolongation algebra $\wea$.

The Wahlquist-Estabrook prolongation algebra for the KdV equation 
was computed in~\cite{kdv,kdv1}.
According to~\cite{kdv1}, this algebra is isomorphic to the direct sum of 
$\msl_2(\fik[\la])$ and a $5$-dimensional nilpotent Lie algebra $H$.
The algebra $H$ has a basis $(r_{-3},r_{-1},r_0,r_1,r_3)$ satisfying
\beq
\lb{rrel}
[r_1,r_{-1}]=[r_{-3},r_3]=-r_0,\qquad
[r_i,r_j]=0\quad\text{if $\,\,i+j\neq 0$}.
\ee

In order to be in agreement with formulas from~\cite{kdv1}, 
we take the KdV equation in the form
\beq
\lb{kdveck}
u_t=-u_3-12u_0u_1,
\ee
which can be transformed to $u_t=u_3+u_0u_1$ 
by scaling of the variables $x$, $t$, $u$.
(Recall that we use the notation~\er{dnot}.)

Let $(h,y,z)$ be a basis of $\msl_2(\fik)$ satisfying
\beq
\lb{hyzsl2}
[h,y]=2y,\qquad [h,z]=-2z,\qquad [y,z]=h.
\ee
In Section~\ref{sec_we_alg} for any evolution PDE of the form~\er{gevxt} 
we have defined the notion of 
formal Wahlquist-Estabrook ZCR with coefficients in~$\wea$, 
which is given by formulas~\er{wgawgbser},~\er{wxgbtga}.
According to~\cite{kdv1}, for the KdV equation~\er{kdveck}, we have 
$\wea\cong H\oplus\msl_2(\fik[\la])$, and $\wga$ from~\er{wgawgbser} 
can be written as
\beq
\lb{wgakdv}
\wga=-2X_1-2u_0X_2-3u_0^2X_3,
\ee
where
\begin{equation}
\label{xxkdv}
X_1=r_1-\frac12y+\frac12z\la,\qquad 
X_2=r_{-1}+z,\qquad 
X_3=r_{-3}.
\end{equation}
Since $r_1,r_{-1},r_{-3}\in H$ and $y,z,z\la\in\msl_2(\fik[\la])$, 
the elements $X_1$, $X_2$, $X_3$ given by~\er{xxkdv} 
belong to $H\oplus\msl_2(\fik[\la])$.

The paper~\cite{kdv1} uses the symbol $T$ in place of $\la$.
For equation~\er{kdveck}, the paper~\cite{kdv1} contains also 
an explicit formula for $\wgb$ from~\er{wgawgbser}, but it is not needed for us.

Formulas~\er{rrel},~\er{hyzsl2},~\er{wgakdv},~\er{xxkdv} imply that, 
in this case, 
the subalgebra $\swe\subset\wea$ defined in Theorem~\ref{thmhfd0} 
is equal to the subalgebra 
$$
\tilde{H}\oplus\msl_2(\fik[\la])\subset H\oplus\msl_2(\fik[\la])\cong\wea,
$$
where $\tilde{H}\subset H$ is spanned by the elements $r_{-1}$, $r_{-3}$, $r_0$.
Formulas~\er{rrel} imply that the $3$-dimensional Lie algebra $\tilde{H}$ 
is abelian.

According to Theorem~\ref{thmhfd0}, one has $\fds^0(\CE,a)\cong\swe$. 
Since $\swe\cong\tilde{H}\oplus\msl_2(\fik[\la])$, we see that 
$\fds^0(\CE,a)$ is isomorphic to the direct sum of $\msl_2(\fik[\la])$ 
and the $3$-dimensional abelian Lie algebra $\tilde{H}$.
\end{proof}

\begin{theorem}
\lb{thkdvm}
Let $\CE$ be the infinite prolongation of the 
KdV equation $u_t=u_3+u_0u_1$. Let $a\in\CE$. 
Then 
\begin{itemize}
 \item the algebra $\fds^0(\CE,a)$ is isomorphic to the direct sum 
of $\msl_2(\fik[\la])$ and a $3$-dimensional abelian Lie algebra,
\item for every $\ocs\in\zsp$, the algebra $\fds^\ocs(\CE,a)$ 
is obtained from $\msl_2(\fik[\la])$ by applying several times the operation of central extension.
\end{itemize}
\end{theorem}
\begin{proof}
The statement about $\fds^0(\CE,a)$ has been proved in Lemma~\ref{lf0kdv}.
In particular, we see that 
$\fds^0(\CE,a)$ is isomorphic to a central extension of $\msl_2(\fik[\la])$.

For every $\ocs\in\zsp$, consider the surjective homomorphism 
$\vf_\ocs\cl\fds^\ocs(\CE,a)\to\fds^{\ocs-1}(\CE,a)$ from~\er{fdnn-1}.
Since the KdV equation is of the form~\er{utukd} for $\kd=1$, 
Theorem~\ref{kntkdvtypeth} implies that 
the kernel of $\vf_\ocs$ is contained in the center of the Lie algebra 
$\fds^\ocs(\CE,a)$. 
Hence for each $\ocs\in\zsp$ the algebra $\fds^\ocs(\CE,a)$ 
is obtained from $\fds^{\ocs-1}(\CE,a)$ by central extension. 
Since $\fds^0(\CE,a)$ is isomorphic to a central extension of $\msl_2(\fik[\la])$, 
we see that $\fds^\ocs(\CE,a)$ is obtained from $\msl_2(\fik[\la])$ 
by applying several times the operation of central extension.
\end{proof}

\section*{Acknowledgements}

S.~Igonin is a research fellow of 
Istituto Nazionale di Alta Matematica (INdAM), Italy. 
S.~Igonin and G.~Manno are members of GNSAGA of INdAM.

The authors acknowledge support by the project 
``FIR-2013 Geometria delle equazioni differenziali''. 
G.~Manno was also partially supported by 
``Starting grant per giovani ricercatori'' of Politecnico di Torino.
The work of S.~Igonin was carried out within the framework of the State Programme of the Ministry of Education and Science of the Russian Federation, 
project 1.12873.2018/12.1.

S.~Igonin would like to thank A.~P.~Fordy, A.~Henriques, I.~S.~Krasilshchik, 
Yu.~I.~Manin, V.~V.~Sokolov, A.~M.~Verbovetsky, and A.~M.~Vinogradov 
for useful discussions.

S.~Igonin is grateful to the Max Planck Institute for Mathematics (Bonn, Germany) 
for its hospitality and excellent working conditions 
during 02.2006--01.2007 and 06.2010--09.2010, when part of this research was done. 

Also, the authors wish to thank the anonymous referee 
for his/her helpful comments and suggestions, 
many of which have resulted in changes to the revised version of this paper.


\begin{thebibliography}{99}

\bibitem{cdg76}
P.~J.~Caudrey, R.~K.~Dodd, and J.~D.~Gibbon.  
A new hierarchy of Korteweg-de Vries equations.
\emph{Proc. Roy. Soc. London Ser. A} \textbf{351} (1976), 407--422. 

\bibitem{dodd}
R.~Dodd and A.~Fordy.
The prolongation structures of quasipolynomial flows.
\emph{Proc. Roy. Soc. London Ser. A} \textbf{385} (1983), 389--429.

\bibitem{kdv} H.~N.~van Eck.
The explicit form of the Lie algebra of Wahlquist and Estabrook. 
A presentation problem.
\emph{Nederl. Akad. Wetensch. Indag. Math.} \textbf{45} (1983), 149--164.

\bibitem{kdv1} H.~N.~van Eck.
A non-Archimedean approach to prolongation theory.
\emph{Lett. Math. Phys.} \textbf{12} (1986), 231--239.

\bibitem{ft}
L.~D.~Faddeev and L.~A.~Takhtajan. 
\emph{Hamiltonian methods in the theory of solitons.} 
Springer-Verlag, 1987. 

\bibitem{finley93}
J.~D.~Finley and J.~K.~McIver. 
Prolongations to higher jets of Estabrook-Wahlquist coverings for PDEs. 
\emph{Acta Appl. Math.} \textbf{32} (1993), 197--225.

\bibitem{fordy-hh}
A.~P.~Fordy. The H\'enon-Heiles system revisited. 
\emph{Phys. D} \textbf{52} (1991), 204--210.


\bibitem{cfa}
S.~Igonin. 
Coverings and fundamental algebras for partial differential equations. 
\emph{J. Geom. Phys.} \textbf{56} (2006), 939--998. 


\bibitem{mll-2012}
S.~Igonin, J.~van de Leur, G.~Manno, and V.~Trushkov.
Infinite-dimensional prolongation Lie algebras and multicomponent 
Landau-Lifshitz systems associated with higher genus curves. 
\emph{J. Geom. Phys.} \textbf{68} (2013), 1--26. 

\bibitem{zcrm17}
S.~Igonin and G.~Manno. 
On Lie algebras responsible for zero-curvature representations 
of multicomponent (1+1)-dimensional evolution PDEs.
Preprint at arXiv:1703.07217

\bibitem{sbt18}
S.~Igonin and G.~Manno.
On Lie algebras responsible for zero-curvature representations and 
B\"acklund transformations of (1+1)-dimensional scalar evolution PDEs.
Preprint at arXiv:1804.04652

\bibitem{sint18} 
S.~Igonin and G.~Manno.
On Lie algebras responsible for integrability of (1+1)-dimensional scalar evolution PDEs.
Preprint, to appear at arxiv.org


\bibitem{kaup80}
D.~J.~Kaup. 
On the inverse scattering problem for cubic eigenvalue problems 
of the class $\psi _{xxx}+6Q\psi _{x}+6R\psi=\lambda \psi $.
\emph{Stud. Appl. Math.} \textbf{62} (1980), 189--216. 

\bibitem{nonl} 
I.~S.~Krasilshchik and A.~M.~Vinogradov. Nonlocal trends
in the geometry of differential equations. \emph{Acta Appl. Math.}
\textbf{15} (1989), 161--209.

\bibitem{krich80}
I.~M.~Krichever and S.~P.~Novikov.
Holomorphic bundles over algebraic curves and nonlinear equations.
\emph{Russian Math. Surveys} \textbf{35} (1980), 53--79.

\bibitem{marvan93}
M.~Marvan. On zero-curvature representations of partial differential equations. 
\emph{Differential geometry and its applications \textup{(}Opava, 1992\textup{)}}, 103--122.
Silesian Univ. Opava, 1993. 
www.emis.de/proceedings/5ICDGA

\bibitem{marvan97}
M.~Marvan.
A direct procedure to compute zero-curvature representations. The case $sl_2$. 
\emph{Secondary Calculus and Cohomological Physics \textup{(}Moscow, 1997\textup{)}}, 9 pp.
www.emis.de/proceedings/SCCP97

\bibitem{marvan2010}
M.~Marvan. On the spectral parameter problem. 
\emph{Acta Appl. Math.} \textbf{109} (2010), 239--255.

\bibitem{mesh-sok13}
A.~G.~Meshkov and V.~V.~Sokolov.
Integrable evolution equations with constant separant.
\emph{Ufa Math. J.} \textbf{4} (2012), 104--153, 
arXiv:1302.6010

\bibitem{mikh91}
A.~V.~Mikhailov, A.~B.~Shabat, and V.~V.~Sokolov.  
The symmetry approach to classification of integrable equations. 
\emph{What is integrability?}, 115--184. Springer, 1991.

\bibitem{novikov99}
D.~P.~Novikov. 
Algebraic-geometric solutions of the Krichever-Novikov equation. 
\emph{Theoret. and Math. Phys.} \textbf{121} (1999), 1567--1573. 

\bibitem{backl}
C.~Rogers and W.~F.~Shadwick.
\emph{B\"acklund transformations and their applications}. 
Academic Press, New York, 1982.



\bibitem{sakov95}
S.~Yu.~Sakovich.  
On zero-curvature representations of evolution equations. 
\emph{J. Phys. A} \textbf{28} (1995), 2861--2869.

\bibitem{sakov2004}
S.~Yu.~Sakovich. 
Cyclic bases of zero-curvature representations: five illustrations to one concept. 
\emph{Acta Appl. Math.} \textbf{83} (2004), 69--83. 

\bibitem{sand-wang2009}
J.~A.~Sanders and J.~P.~Wang. 
Number theory and the symmetry classification of integrable systems. 
\emph{Integrability}, 89--118, \emph{Lecture Notes in Phys.} \textbf{767}. 
Springer, Berlin, 2009.

\bibitem{sk74}
K.~Sawada and T.~Kotera.
A method for finding {$N$}-soliton solutions 
of the {K}.d.{V}. equation and {K}.d.{V}.-like equation.
\emph{Progr. Theoret. Phys.} \textbf{51} (1974), 1355--1367.

\bibitem{sebest2008}
P.~Sebesty\'en. 
On normal forms of irreducible $sl_n$-valued zero-curvature representations. 
\emph{Rep. Math. Phys.} \textbf{62} (2008), 57--68.


\bibitem{svin-sok83}
S.~I.~Svinolupov, V.~V.~Sokolov, and R.~I.~Yamilov. 
On B\"acklund transformations for integrable evolution equations. 
\emph{Soviet Math. Dokl.} \textbf{28} (1983), 165--168.

\bibitem{Prol} H.~D.~Wahlquist and F.~B.~Estabrook.
Prolongation structures
of nonlinear evolution equations. \emph{J. Math. Phys.}
\textbf{16} (1975), 1--7.

\bibitem{zakh-shab}
V.~E.~Zakharov and A.~B.~Shabat. 
Integration of nonlinear equations of mathematical physics by the method of inverse scattering. II. 
\emph{Functional Anal. Appl.} \textbf{13} (1979), 166--174. 

\end{thebibliography}
\end{document}